\newcommand{\R}{\mathbb{R}}
\definecolor{ggreen}{cmyk}{0.7,     0,      0.9,      0}
\definecolor{viol}{cmyk}{0.3,1,0,0}
\definecolor{myred}{cmyk}{0.1, 1, 0.5, 0}
\definecolor{bblue}{rgb}{0.2, 0.29996, 0.8 }
\theoremstyle{plain}
\newtheorem{theorem}{Theorem}[section]
\newtheorem{corollary}{Corollary}[section]
\newtheorem{proposition}{Proposition}[section]
\begin{document}

%%%%%%%%%%%%%%%%%

\title{\bf A stellar model with diffusion\\ in general relativity}
\author
{A.~Alho  \\
          {\small Center for Mathematical Analysis, Geometry and Dynamical Systems}  \\
     {\small Instituto Superior T\'ecnico, Universidade de Lisboa} \\
     {\small Av. Rovisco Pais, 1049-001 Lisboa, Portugal} \\
      {\small  \tt   aalho@math.ist.utl.pt}\\[0.4cm]
      S. Calogero  \\
      {\small Department of Mathematical Sciences}  \\
      {\small Chalmers University of Technology, University of Gothenburg}  \\
      {\small Gothenburg, Sweden} \\
      {\small  \tt  calogero@chalmers.se}
      }

\maketitle

%%%%%%%%%%%%%%%%%

\begin{abstract}
\noindent
We consider a spherically symmetric stellar model in general relativity whose interior consists of a pressureless fluid undergoing 
microscopic velocity diffusion in a cosmological scalar field. We show that the diffusion dynamics compel the interior to be spatially 
homogeneous, by which one can infer immediately that within our model, and in contrast to the diffusion-free case, no naked singularities 
can form in the gravitational collapse.  We then study the problem of matching an exterior Bondi type metric to the surface of the star 
and find that the exterior can be chosen to be a modified Vaidya metric with variable cosmological constant. Finally, we study in detail 
the causal structure of an explicit, self-similar solution.

\end{abstract}

%%%%%%%%%%%%%%%%%
%%%%%%%%%%%%%%%%%

\section{Introduction}
\label{sec:int}
The structure of singularities formed in the gravitational collapse of bounded matter distributions is a widely investigated problem in general 
relativity. The question of whether such singularities are naked, i.e., visible to far-away observers, or whether on the contrary they 
are safely hidden inside a black hole, has been the subject of innumerable works in the physical and mathematical literature. 
Nevertheless the problem remains poorly understood, even in the idealized case in which the collapsing body is spherically symmetric.
%Related to this question is the Weak Cosmic Censorship conjecture, which states {\bf ADD}. 
A complete solution is only available for the gravitational collapse of a spherically symmetric dust cloud and can be found in the pioneering works by 
~Oppenheimer-Snyder~\cite{OS39} and Christodoulou~\cite{Chr84}.  (In~\cite{Chr94,Chr99} Christodoulou analyses the question of existence and stability of naked singularities in the 
gravitational collapse of a massless scalar field.  Notwithstanding the importance of these works, our focus is on matter models that describe 
material bodies, such as perfect fluids and kinetic particles~\cite{And}.) 

The Oppenheimer-Snyder model consists of a collapsing spatially homogeneous and isotropic dust interior, described by the contracting 
Friedmann-Lema\^itre solution, matched at a comoving boundary with a Schwarzschild exterior. Dropping the 
homogeneity assumption of the interior leads to the class of Lema\^itre-Tolman-Bondi solutions. These inhomogeneous stellar 
models were studied numerically by Eardley and Smarr~\cite{ES79}, and analytically by Christodoulou~\cite{Chr84}. 
It was shown that, in contrast to the case studied by Oppenheimer and Snyder, the spatially inhomogeneous collapse leads to the formation of naked singularities. See the prologue of~\cite{Chr08} for an historical review on the gravitational collapse problem in general relativity. 
% Later and guided by the well-known fact that a pressureless fluid is pathological in Minkowski spacetime, see eg.~\cite{Chr07}, 
% Christodoulou~\cite{Chr95,Chr96a,Chr96b} introduced a two-phase model as a more realistic matter model, 
% (see also~\cite{CL14,CL15} and the prologue of~\cite{Chr08}). \\

In this paper we initiate the study of the gravitational collapse of matter subject to diffusion. We believe that the inclusion of 
diffusion dynamics in the gravitational collapse problem is meaningful both from a mathematical and physical point of view. 
From one hand it is well known that diffusion terms introduce a regularizing effect in the equations, which might prevent the 
formation of naked singularities in general relativity. 
On the other hand the physical relevance of diffusion phenomena is unquestionable and the applications in general relativity have 
been discussed in~\cite{calogero1,calogero2,calogero3}.

We begin our study with the simplest possible model, namely a spherically symmetric dust cloud undergoing diffusion in a cosmological 
scalar field. In this case the regularizing effect due to diffusion is overwhelming: The interior of the dust cloud is forced to be 
spatially homogeneous. By this fact one can easily infer that, in contrast to the diffusion-free scenario described above, naked 
singularities cannot form in the gravitational collapse of a spherically symmetric dust cloud in the presence of diffusion.

%Difference with the diffusion-free case: choice of the exterior.  
%As mentioned above, in the absence of diffusion naked singularities form in the case of a spatially inhomogeneous collapse. 
%While this regularizing effect might be quite difficult to quantify in general... the diffusion forces compel the interior of the 
%dust cloud to be spatially homogeneous and thus the naked singularities found by Christodoulou cannot form in the presence of diffusion. 
%More complicated exterior, physical implications not studied here.   

Another interesting property of our model is that, in contrast to the diffusion-free case~\cite{OS39,FST96,Chr84}, 
the exterior of the star cannot be static. The simplest spherically symmetric solution of the Einstein equations that can provide 
a suitable exterior region for our stellar model is given by a Vaidya type metric which includes a variable 
cosmological constant (a generalization of the radiating version of the Schwarzschild-(Anti-)de-Sitter family of solutions). 

A detailed analysis of our model is given in the following sections. We conclude this introduction by outlining the diffusion theory of matter in general relativity. 
There exist two versions of this theory: a kinetic one~\cite{calogero1}, which is based on a Fokker-Planck equation for the particle density in phase-space, 
and a fluid one~\cite{calogero2}, which is the (formal) macroscopic limit of the kinetic theory.
%~\footnote{In this sense, this model constitutes a fundamental
%description of diffusion in General Relativity in contrast to phenomenological models described by an imperfect-fluid~\cite{BOS89}.}. 
In the present paper we apply the fluid theory. We recall that the energy-momentum tensor and current density of a perfect fluid are
\begin{equation}
T^{\mu\nu}=\rho u^\mu u^\nu +p(g^{\mu\nu}+u^\mu u^\nu ),  \qquad
J^\mu =n u^\mu,
\label{TJ}
\end{equation}
where $\rho$ is the rest-frame energy density, $p$ the pressure, $u$ the four-velocity and $n$ the
particle density of the fluid. The diffusion behavior is imposed by postulating the equations
\begin{subequations}\label{mattereq}
\begin{align}
&\nabla_\mu T^{\mu\nu}=\sigma J^\nu,\label{diffeq}\\
& \nabla_\mu(n u^\mu )=0,\label{consn}
\end{align}
\end{subequations}
where $\sigma>0$ is the diffusion constant, which measures the energy gained by the particles  per unit
of time due to the action of the diffusion forces. The second equation entails the conservation of the total number of fluid particles.

By projecting~\eqref{diffeq} along the direction of $u^\mu$ and onto the hypersurface orthogonal to $u^\mu$, we obtain the following equations on the matter fields:
\begin{subequations}\label{mattereq2}
\begin{eqnarray}
& \nabla_{\mu} (\rho u^{\mu})+p\nabla_{\mu} u^{\mu}=\sigma n,\label{ProjTJ1}\\
&(\rho +p) u^\mu \nabla_{\mu}u^\nu +u^\nu u^\mu \nabla_{\mu}p+\nabla^{\nu} p=0.\label{ProjTJ2}
\end{eqnarray}
\end{subequations}
The system~\eqref{mattereq2} on the matter variables must be completed by assigning an equation of state between the pressure, the energy density and the particles number density. In this paper we assume that the fluid is pressureless (dust fluid).
%; defining the entropy per particle as $s=\rho/n$, it follows by~\eqref{consn} and\eqref{ProjTJ1}$_{p=0}$ that 
%\[
%u^\mu\nabla_\mu s=\sigma,
%\]
%which shows that the entropy is increasing along the matter flow. 
As the energy-momentum tensor of the fluid is not divergence-free, 
see~\eqref{diffeq}, we have to postulate the existence of an additional matter field in spacetime to re-establish the (local) conservation 
of energy. The role of this additional matter field is that of the solvent matter in which the diffusion of the fluid particles takes 
place. The simplest choice is to assume the existence of a vacuum energy scalar field $\phi$ with energy-momentum tensor 
$-\phi g_{\mu\nu}$, which leads to the following Einstein equations for the spacetime metric $g$ (in units $8\pi G=c=1$):
\begin{equation}\label{EinsteinGEN}
R_{\mu\nu}-\frac{1}{2}g_{\mu\nu}R+\phi g_{\mu\nu}=T_{\mu\nu}.
\end{equation}
The evolution equation on the scalar field $\phi$ determined by~\eqref{EinsteinGEN}, the Bianchi identities, 
and the diffusion equation~\eqref{diffeq} is
\begin{equation}\label{phieq}
\nabla_\nu \phi =\sigma J_\nu.
\end{equation}
%

%The objective of this work is to study self-similar spherically symmetric spacetimes with a perfect fluid
%with linear equation of state
%undergoing microscopic velocity diffusion. In the absence of diffusion the
%characterization of such solutions begun with the pioneer work of Cahill and Taub~\cite{CT71}, to what is now
%refered as sel-similarity of the {\it first-kind} ($\xi$ is a homothetic vector and the self-similar
%variable is $r/t$).
%There it was proved that for a barotropic equation of state ...\\

\section{The stellar model}
Throughout the rest of the paper we assume that spacetime $(M,g)$ is spherically symmetric. The particle number density  $n:M\to [0,\infty)$ and the four-velocity $u_x:T_xM\to\R^4$ at each point $x\in M$ are also spherically symmetric and satisfy~\eqref{consn}. Under these assumptions  one can cover an open neighborhood $U$ of the center of symmetry by a coordinate system $(t,R,\theta,\psi)$ such that the metric and the four-velocity take the form
\[
g=-e^{2\Phi}\,dt^2+e^{2\Psi}\,dR^2+r^2\,d\Omega^2\quad u=e^{-\Phi}\partial_t\quad \text{in $U$}.
\]
Here $\Phi,\Psi,r$ are functions of $(t,R)$ and $d\Omega^2=d\theta^2+\sin^2\theta d\psi^2$ is the standard metric on $S^2$. The center of symmetry is defined by the timelike curve $r(t,R)=0$. This coordinate system is called {\it comoving} and it is defined up to a transformation $t\to F(t), R\to G(R)$ of the time and radial coordinates. 

\subsection{The interior}
We assume that the interior of the star defines a region $V$ of spacetime covered by comoving coordinates. In particular, we assume that there exist $T\in (0,+\infty]$ and $R_b>0$ such that $0\leq t<T$ and $0\leq R<R_b$ within $V\subset U$. The timelike hypersurface given by
\[
\Sigma\ :\ R=R_b,
\] 
will be identified with the surface of the star. 
%In particular, the surface of the star is comoving, i.e., the four-velocity of the fluid is tangent to it at each point on $\Sigma$. 
To identify the region $V$ as the interior, we assume that the matter fields $(\rho,n)$ are nowhere vanishing on $V$.  
%We show below that the latter property  needs to be required only on the initial hypersurface $t=0$. 
The freedom in the choice of the comoving coordinates will be used to impose
\begin{equation}\label{bcPhi}
\Phi(t,R_b)=0,
\end{equation}
so that $t$ is the proper time of observers at rest with respect to the boundary of the star, and
\begin{equation}\label{bcr}
r(0,R)=R,
\end{equation}
so that the comoving radius $R$ coincides initially, i.e., at time $t=0$, with the radius function of the group orbits. 
It will now be proved that, when the fluid is pressureless, the diffusion dynamics compel the interior of our model to be spatially homogeneous. We denote $\dot f=\partial_t f$, $f'=\partial_R f$, for any function $f=f(t,R)$, and similarly $\dot y$ denotes the 
derivative of any function $y$ of one variable.

\begin{theorem}
Let $p=0$ and let $(g,\rho,n,u,\phi)$ be a spherically symmetric solution of~\eqref{mattereq2}-\eqref{phieq} on $V$. Then $\rho$, $n$, $\phi$ are functions of $t\in[0,T)$ only and there exist a positive function $a:[0,T)\to (0,\infty)$ and a constant $k\in\R$ such that
\begin{equation}
g=-dt^2+a(t)^2(\frac{dR^2}{1-kR^2}+R^2d\Omega^2),\quad u=\partial_t.
\end{equation}
\end{theorem}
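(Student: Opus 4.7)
The plan is to extract, one by one, the constraints the diffusion system places on the unknowns, reserving the spatial-homogeneity conclusion for the final step. With $p=0$, equation~\eqref{ProjTJ2} reduces to $\rho\,u^\mu\nabla_\mu u^\nu=0$, so the fluid flow is geodesic; computing the $R$-component of the geodesic equation for $u=e^{-\Phi}\partial_t$ yields $\Phi'=0$, and the boundary gauge~\eqref{bcPhi} then forces $\Phi\equiv 0$, so $u=\partial_t$. Since now $u_R=0$, the $R$-component of~\eqref{phieq} gives $\partial_R\phi=0$, so $\phi=\phi(t)$; the $t$-component $\partial_t\phi=-\sigma n$ then forces $n=n(t)$. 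Number conservation~\eqref{consn} becomes $\partial_t(e^{\Psi} r^2 n)=0$, which combined with $n=n(t)$ yields $e^\Psi r^2=h(R)/n(t)$ for some positive function $h$.

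Next, because $T_{tR}=0$ (since $u_R=0$ and $p=0$) and $g_{tR}=0$, the Einstein equation gives $G_{tR}=0$. For our comoving metric with $\Phi=0$ this is the Lema\^itre--Tolman--Bondi identity $\dot r'=r'\dot\Psi$, so that $e^{-\Psi}r'$ is independent of $t$; the initial condition~\eqref{bcr} then gives $e^\Psi=\lambda(R)\,r'$ with $\lambda(R):=e^{\Psi(0,R)}$. Substituting in the relation above yields $\lambda(R)(r^3)'=3h(R)/n(t)$; integrating in $R$ from the centre $r(t,0)=0$ and matching~\eqref{bcr} at $t=0$ forces $r(t,R)=a(t)R$ with $a(t):=(n(0)/n(t))^{1/3}$ and $a(0)=1$. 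Hence $e^\Psi=\lambda(R)a(t)$, and the metric has already acquired the warped-product form $g=-dt^2+a(t)^2\gamma$ with the time-independent 3-metric $\gamma:=\lambda(R)^2 dR^2+R^2 d\Omega^2$.

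The main obstacle is the remaining step: showing that $\gamma$ has constant sectional curvature, i.e.\ $\lambda(R)^2=1/(1-kR^2)$ for some $k\in\R$. For this I plan to use the standard warped-product decomposition of the Einstein tensor, which for the present metric gives $G_{ij}={}^{(3)}R_{ij}(\gamma)-\gamma_{ij}\bigl[2a\ddot a+\dot a^2+\tfrac{1}{2}\,{}^{(3)}R_\gamma\bigr]$. Combined with the spatial Einstein equation $G_{ij}=-\phi g_{ij}=-\phi a^2\gamma_{ij}$, this reads
\[
{}^{(3)}R_{ij}(\gamma)=\gamma_{ij}\bigl[\,-\phi(t)a(t)^2+2a(t)\ddot a(t)+\dot a(t)^2+\tfrac{1}{2}\,{}^{(3)}R_\gamma(R)\,\bigr].
\]
Since the left-hand side is $t$-independent while the bracket mixes $t$- and $R$-dependent pieces, separation of variables forces the combination $-\phi a^2+2a\ddot a+\dot a^2$ to equal some constant $-C$; taking the trace of the resulting identity then yields ${}^{(3)}R_\gamma\equiv 6C$ and ${}^{(3)}R_{ij}(\gamma)=2C\gamma_{ij}$, so $\gamma$ is a three-dimensional Einstein metric, hence of constant sectional curvature $k:=C$. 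For a rotationally symmetric metric of the assumed form this is equivalent to $\lambda^2=1/(1-kR^2)$. Finally, the $G_{tt}$ Einstein equation gives $\rho=3\dot a^2/a^2+3k/a^2-\phi$, which depends only on $t$, completing the proof.
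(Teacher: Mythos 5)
Your argument is correct and follows essentially the same route as the paper: $\Phi\equiv 0$ from the momentum constraint with the gauge~\eqref{bcPhi}, $\phi$ and $n$ functions of $t$ from~\eqref{phieq}, then $r=a(t)R$ by combining the $(t,R)$ Einstein equation $\dot r'=r'\dot\Psi$ with particle-number conservation and the conditions $r(0,R)=R$, $r(t,0)=0$, and finally constant spatial curvature by separating the $t$- and $R$-dependence in the remaining Einstein equations. The only (cosmetic) difference is the last step: the paper separates variables directly in a single scalar component, obtaining $1-f^2(R)=kR^2$, whereas you separate at the level of the spatial Ricci tensor of the warped product and invoke the fact that a three-dimensional Einstein metric has constant sectional curvature -- both are valid and of comparable length.
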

\begin{proof}
Equation~\eqref{ProjTJ2} for $p=0$ reads $\rho \Phi'=0$, which together with the boundary condition~\eqref{bcPhi} implies
$\Phi=0$. Hence $u=\partial_t$, as claimed.
Equation~\eqref{consn} now reads
\begin{equation}\label{dJ=0com}
\partial_t( nr^2 e^{\Psi})=0.
\end{equation}
Moreover~\eqref{phieq} implies that $\phi$ and $n$ are functions of $t$ only, related by
\begin{equation}\label{phisol2}
\phi(t)=\phi(0)-\sigma\int_0^t n(s)\,ds.
\end{equation}
%It follows by the latter two equations that the entropy $S=\rho/n$ is increasing and satisfies
%\[
%\partial_tS=\frac{\sigma e^{-\Psi}}{K^2}.
%\]
The Einstein equations~\eqref{EinsteinGEN} are:
\begin{align}
&-2\frac{r''}{r}+2\frac{r'}{r}\Psi'-\left(\frac{r'}{r}\right)^2+e^{2\Psi}\left(\left(\frac{\dot{r}}{r}\right)^2+2\frac{\dot r}{r}\dot\Psi\right)+\frac{e^{2\Psi}}{r^2}=(\rho+\phi)e^{2\Psi},\label{ham}\\
&e^{-2\Psi}\left(\frac{r'}{r}\right)^2-\frac{1}{r^2}-\left(\frac{\dot r}{r}\right)^2-2 \frac{\ddot r}{r}+\phi=0,\label{second}\\
& -\ddot\Psi-\dot\Psi^2-\frac{\ddot r}{r}-\frac{\dot r}{r}\dot\Psi+e^{-2\Psi}\left(\frac{r''}{r}-\frac{r'}{r}\Psi'\right)+\phi=0,\\
& r'\dot\Psi-\dot r'=0,
\end{align}
where the only non-zero component of the energy momentum tensor is $T_{00}=\rho$. The last Einstein equation gives
\begin{equation}\label{psi}
e^{2\Psi}=\frac{{r'}^2}{f(R)^2},
\end{equation}
where $f(R)$ is an arbitrary positive function of the radial variable. Substituting into~\eqref{dJ=0com} we obtain
\[
\partial_t(n r^2 r')=0.
\]
Integrating with the boundary condition~\eqref{bcr} we obtain
\begin{equation}\label{K}
r(t,R)=R\left(\frac{n(0)}{n(t)}\right)^{1/3}:=R a(t),
\end{equation}
where we denoted $a(t)=(n(0)/n(t))^{1/3}$.
Substituting~\eqref{psi} and \eqref{K} into~\eqref{second} we obtain the equation
\[
[2 a(t)\ddot{a}(t)+\dot{a}(t)^2-\phi(t) a(t)^2]R^2+1-f^2(R)=0.
\]
The latter implies that there exists a constant $k\in\R$ such that
\[
1-f^2(R)=kR^2,\quad 2\frac{\ddot{a}}{a}+\left(\frac{\dot{a}}{a}\right)^2-\phi=-\frac{k}{a^2}.
\]
Finally~\eqref{ham} entails that $\rho$ is a function of $t$ only.
\end{proof}
From now on we assume that the interior fluid is pressureless.
By rescaling the radial coordinate we may assume that $k\in\{-1,0,1\}$, hence the spacetime metric in the interior of the star takes 
the Robertson-Walker form
\begin{equation}\label{interiormetric}
g_\mathrm{int}=-dt^2+a(t)^2(\frac{dR^2}{1-kR^2}+R^2d\Omega^2),\quad k=0,\pm1,\quad R<R_b,
\end{equation}
where $R_b<1$ for $k=1$, and the equations ~\eqref{mattereq2}-\eqref{phieq} reduce to the following. The conservation of the particle 
number density~\eqref{consn} gives $n(t)=a(0)^3 n(0)/a(t)^3$, while $a(t),\rho(t),\phi(t)$ satisfy
\begin{subequations}\label{einsteineqs}
\begin{align}
&\Big(\frac{\dot a}{a}\Big)^2+\frac{k}{a^2}=\frac{1}{3}(\rho+\phi),\label{const}\\
&\frac{\ddot a}{a}=-\frac{1}{6}\rho+\frac{1}{3}\phi,\label{friedmann}\\
&\dot\phi=-3\frac{\beta}{a^3},\label{phieq2}\\
&\dot\rho=-3\rho\frac{\dot a}{a}-\dot\phi,\label{rhoeq}
\end{align}
\end{subequations}
where 
\begin{equation}\label{beta}
\beta=\sigma n(0) a(0)^3/3. 
\end{equation}
The initial data for the system~\eqref{einsteineqs} are given on $\{t=0\}\cap V$ and consist of a 
quadruple $(a_0,\dot a_0,\rho_0,\phi_0)$, with $a_0,\rho_0>0$, such that the constraint equation~\eqref{const} is satisfied at time $t=0$, 
i.e.,
\[
\Big(\frac{\dot a_0}{a_0}\Big)^2+\frac{k}{a_0^2}=\frac{1}{3}(\rho_0+\phi_0).
\] 
In particular, the set of admissible initial data comprises a three-dimensional manifold, which we denote by $\mathcal{I}$.  
By a regular solution of~\eqref{einsteineqs} in the interval $[0,T)$ with  initial data $(a_0,\dot a_0,\rho_0,\phi_0)\in\mathcal{I}$ 
we mean a triple of functions $a\in C^2((0,T))$, $\phi,\rho\in C^1((0,T))$ satisfying~\eqref{friedmann}-\eqref{rhoeq} and such that 
$a(t),\rho(t)>0$, for $t\in [0,T)$, and $\lim_{t\to 0^+}(a(t),\dot a(t),\rho(t),\phi(t))=(a_0,\dot a_0,\rho_0,\phi_0)$. 
Let $T_\mathrm{max}$ be the maximal time of existence of a regular solution with a given set of initial data. 
We say that the regular solution is global if $T_\mathrm{max}=+\infty$.

For $\beta=0$, i.e., in the absence of diffusion, the interior reduces to a dust cloud in a spacetime with cosmological constant 
$\Lambda=\phi_0$. In particular, for $\beta=0$ and $\phi_0=0$, the exterior must be Schwarzschild, and we recover the well-known 
Oppenheimer-Snyder model~\cite{OS39}. The analysis of the latter model is greatly simplified by the fact that a closed formula solution 
is known for the Einstein equations, see eg.~\cite{GP09}. In contrast to this, the general solution to the system~\eqref{einsteineqs} 
is not known, and so the analysis of the stellar interior in the presence of diffusion is more complicated. 
This analysis has been carried out in~\cite{AC15}, where a complete characterization of the qualitative behavior of solutions to the system~\eqref{einsteineqs} has been obtained using dynamical systems methods. The relevant properties for the present study are summarized in the following theorem:
\begin{theorem}\label{theo}
The manifold $\mathcal{I}$ of initial data can be written as the disjoint union of two three-dimensional submanifolds $\mathcal{I}_\mathrm{exp}$, $\mathcal{I}_{col}$ such that
\begin{itemize}
\item[(i)] For initial data in the submanifold $\mathcal{I}_\mathrm{exp}$, the corresponding regular solution of~\eqref{einsteineqs} 
is global and $\dot{a}(t)>0$ %, $\phi(t)>0$ 
holds for all $t\geq 0$. We call these interior models ``expanding".
\item[(ii)] For initial data in the submanifold $\mathcal{I}_\mathrm{col}$, the corresponding regular solution of~\eqref{einsteineqs} 
exists only up to a finite time $T_\mathrm{max}=t_s>0$ and $a(t)\to 0$ as $t\to t_s^-$. 
Moreover %$\phi(t)\to-\infty$ as  $t\to t_s^-$ and 
there exists a constant $c>0$ such that
\[
a(t)\sim c(t-t_s)^{2/3},\quad \dot{a}(t)\sim \frac{2c}{3} (t-t_s)^{-1/3}, \quad \ddot{a}(t)\sim -\frac{2c}{3}(t-t_s)^{-5/3}, \quad\text{as $t\to t_s^-$}.  
\]
We call these interior models ``collapsing".
%\item[(iii)] For collapsing models there exists $t_0\in (0,t_s)$ such that $\dot a(t_0)=0$, $\phi(t_0)\leq 0$ and $\dot{a}(t)$, $\phi(t)$ are negative for all $t\in (t_0, t_s)$.
\end{itemize} 
\end{theorem}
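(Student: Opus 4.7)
The plan is to view \eqref{einsteineqs} as a three-dimensional dynamical system on $\mathcal{I}$ and to classify orbits by compactification and phase-space techniques. A preliminary reduction comes from combining \eqref{phieq2} and \eqref{rhoeq}: $\tfrac{d}{dt}(\rho a^3) = 3\beta$, so $\rho(t) a(t)^3 = \rho_0 a_0^3 + 3\beta t$, and in particular $\rho>0$ holds as long as $a>0$ and $t\geq 0$. Consequently, the only way regularity can fail is via the scale factor reaching zero in finite time, so it suffices to understand the long-time behavior of $a$ and the fate of $\phi$.

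Next I would pass to Hubble-normalized variables in the spirit of cosmological dynamical systems: with $H=\dot a/a$, set $\Omega_\rho = \rho/(3H^2)$, $\Omega_\phi = \phi/(3H^2)$, $\Omega_k = -k/(aH)^2$, plus an auxiliary diffusion variable such as $\Omega_\beta = \beta/(a^3 |H|^3)$. The constraint \eqref{const} becomes $\Omega_\rho + \Omega_\phi + \Omega_k = 1$, and after rescaling time by $d\tau = |H|\,dt$ one obtains an autonomous flow on a compact three-dimensional state space. One then enumerates the equilibria and their stable/unstable manifolds. Two attracting regimes are expected to emerge: a de Sitter-type future attractor, at which $\Omega_\phi \to 1$ and $H$ stays bounded below, producing eternal expansion; and a dust past attractor at which $\Omega_\rho\to 1$, corresponding to an FLRW matter-dominated singularity reached in finite proper time. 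Openness of their basins would yield the decomposition $\mathcal{I} = \mathcal{I}_\mathrm{exp}\sqcup\mathcal{I}_\mathrm{col}$.

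For the asymptotic profile of $a$ near $t_s$ in case (ii), I would argue directly from the original system rather than from the normalized flow. Since $\rho a^3 = \rho_0 a_0^3 + 3\beta t$ stays strictly positive and bounded up to $t_s$, we have $\rho \sim C a^{-3}$ for a constant $C>0$. A crude bound from \eqref{phieq2}, combined with a bootstrap argument around the ansatz $a\sim c(t_s-t)^{2/3}$, gives $|\phi(t)|=O((t_s-t)^{-1})$, which is subdominant compared to $\rho \sim (t_s-t)^{-2}$. Hence \eqref{const} reduces asymptotically to the dust Friedmann equation $\dot a^2 \sim \tfrac{1}{3}\rho\,a^2$, whose integration pins down the exponent $2/3$ and the constant $c$; the stated asymptotics for $\dot a$ follow by differentiation and for $\ddot a$ by substitution into \eqref{friedmann}.

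The hardest point will be establishing that the basins $\mathcal{I}_\mathrm{exp}$ and $\mathcal{I}_\mathrm{col}$ are both open and jointly exhaust $\mathcal{I}$, so that no orbit lingers on a saddle separatrix. This demands a careful linearization---and possibly a center-manifold reduction---at the intermediate equilibria of the compactified flow, together with a monotone quantity (for example built from $H$ and $\Omega_\rho$) ruling out recurrent behavior. Bookkeeping the sign of $k$ and the behavior across possible turnaround points $H=0$ adds a further layer of case analysis.
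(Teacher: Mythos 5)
The first thing to say is that the paper does not prove this theorem at all: it is imported verbatim from \cite{AC15}, where the system \eqref{einsteineqs} is analysed precisely by the Hubble-normalized dynamical-systems method you describe. So your strategy is the right one and matches the source; moreover your two concrete computations are sound and are in fact used in the paper itself: $\tfrac{d}{dt}(\rho a^3)=3\beta$ is exactly the identity behind \eqref{dtmassdust}, and the mechanism you identify for the asymptotics in (ii) --- $\rho a^2\sim C/a\to\infty$ while $\phi a^2\to 0$, so that \eqref{const} degenerates to the dust Friedmann equation --- is the correct explanation of the Friedmann--Lema\^itre-like blow-up. (Incidentally, carrying your computation through gives $\ddot a\sim -\tfrac{2c}{9}(t-t_s)^{-4/3}$, which suggests the exponent $-5/3$ and coefficient printed in the theorem contain a typo; your method is consistent with differentiating the leading term, the printed formula is not.)

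That said, as written your text is a program rather than a proof, and two of the steps you defer are exactly where the content lies. First, ``openness of the two basins'' cannot by itself yield the claimed decomposition: $\mathcal{I}$ minus the set of bad data is being partitioned into two full-dimensional pieces, and two disjoint open sets cannot exhaust a connected $3$-manifold. The real work is to show that the borderline orbits (those on the stable sets of the intermediate equilibria of the compactified flow, e.g.\ the analogue of the Einstein-static/separatrix solutions for $k=1$) still fall into one of the two behaviors --- either $\dot a>0$ for all time or $a\to 0$ in finite time --- so that the dichotomy is exhaustive even though at most one of $\mathcal{I}_\mathrm{exp}$, $\mathcal{I}_\mathrm{col}$ is open. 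This requires the center-manifold/monotonicity analysis you only gesture at. Second, your reduction ``regularity can only fail via $a\to 0$'' should be justified: it follows from the a priori bounds $\phi(t)\le\phi_0$ (since $\dot\phi=-3\beta/a^3<0$) and $\rho a^3=\rho_0a_0^3+3\beta t$, which through \eqref{const} control $\dot a$ on any interval where $a$ is bounded away from zero and also exclude finite-time blow-up of $a$; this is easy but must be said. Finally, the bootstrap near $t_s$ needs an a priori statement that $a\to 0$ monotonically with $\dot a^2\to\infty$ before the ansatz $a\sim c(t_s-t)^{2/3}$ can be closed; in \cite{AC15} this is extracted from the local analysis of the corresponding fixed point of the normalized flow rather than from a free-standing bootstrap.
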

The asymptotic behavior of the scalar factor claimed in (ii) corresponds to the property proved in~\cite{AC15} that collapsing (dust) 
solutions behave like the Friedmann-Lema\^{i}tre diffusion-free solution in the limit toward the singularity. It implies that the spacelike singularity at $t=t_s$ is a 
curvature singularity, as the Kretschmann scalar $K=\mathrm{Riem}^2$ satisfies
\[
K=O((t-t_s)^{-4}),\quad \text{as $t\to t_s^-$.}
\]
In particular, spacetime is inextendible beyond the spacelike hypersurface $t=t_s$ and no outgoing light ray can emanate from the 
singularity. We obtain the following important corollary:
\begin{corollary}
There exist no naked singularities in the spherical collapse of dust clouds undergoing diffusion in a cosmological scalar field.
\end{corollary}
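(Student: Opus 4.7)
The plan is to derive the corollary as an almost immediate consequence of Theorem~1.2 and Theorem~1.3. The key input from Theorem~1.2 is that, under the diffusion dynamics, the interior of the star must carry a Robertson--Walker metric of the form~\eqref{interiormetric}. In particular, the comoving time coordinate $t$ is the proper time of every interior observer, and slices of constant $t$ are globally spacelike. This spatial homogeneity is the structural fact that prevents the Christodoulou--Eardley--Smarr mechanism of naked singularity formation, in which different shells of dust become singular at different comoving times and permit outgoing null geodesics to emerge from an ``early'' central singularity.

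Starting from this, the argument splits along the dichotomy of Theorem~1.3. For initial data in $\mathcal{I}_\mathrm{exp}$ the solution is global and $\dot a>0$, so no singularity forms in the interior and the corollary is vacuous. For initial data in $\mathcal{I}_\mathrm{col}$, the scale factor $a(t)$ reaches zero at a finite $t=t_s$, and the asymptotics $a\sim c(t-t_s)^{2/3}$ imply the blow-up of the Kretschmann scalar $K=O((t-t_s)^{-4})$. I would then observe that the singular set in the interior coincides with the spacelike hypersurface $\{t=t_s\}$, which is simultaneous across the whole comoving range $0\le R<R_b$. Since $K\to\infty$ on this hypersurface, the interior spacetime is $C^2$-inextendible across $\{t=t_s\}$.

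To conclude that the singularity is not naked, one has to show that no future-directed causal curve starting in any neighborhood of a point of $\{t=t_s\}$ can reach a far-away observer in the maximal spacetime formed by gluing the Vaidya-type exterior constructed later in the paper. For this I would use that, in the Robertson--Walker interior, any outgoing radial null geodesic satisfies $\dot t>0$ and its areal radius $r=a(t)R$ is bounded above by $a(t)R_b\to 0$ as $t\to t_s^-$. Hence every outgoing causal curve originating from the central region at a time $t<t_s$ either exits the boundary $\Sigma$ at some earlier $t<t_s$---and therefore carries no information from the singular set itself---or remains inside the star and necessarily terminates at $\{t=t_s\}$ by causality and compactness of $[0,R_b]$. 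Combined with inextendibility, this rules out any causal curve emanating from the singularity.

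The main obstacle I anticipate is the last step, where one must verify that no null ray can leave an arbitrarily small neighborhood of $\{t=t_s\}$ and reach the exterior region once the matching with the Vaidya metric is fixed; this requires checking that the matching preserves the causal character of the hypersurface $\{t=t_s\}$ and that an apparent horizon forms in the exterior before (or at) the moment of complete collapse. Modulo these checks, which are standard given the Robertson--Walker structure of the interior and are summarized already in the paragraph preceding the corollary, the argument is short: spatial homogeneity forces the singularity to be simultaneous, simultaneity plus curvature blow-up makes it spacelike and inextendible, and a spacelike inextendible singularity cannot be seen by any distant observer.
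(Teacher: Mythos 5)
Your argument is correct and follows essentially the same route as the paper: spatial homogeneity (the first theorem) forces the singular set to be the simultaneous hypersurface $\{t=t_s\}$, the collapsing asymptotics make it a spacelike curvature singularity, hence the spacetime is inextendible beyond it and no future-directed causal curve can emanate from it. The only superfluous part is the ``main obstacle'' you anticipate concerning the exterior matching and the formation of an apparent horizon: since no causal curve can originate on an inextendible spacelike future boundary, the conclusion holds independently of the choice of exterior, exactly as the paper remarks immediately after the corollary.
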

We remark that the above result is independent of the spacetime exterior and applies to all collapsing models. 
In the absence of diffusion the formation of (locally and globally) naked singularities in the gravitational collapse of a dust cloud 
occurs for spatially inhomogeneous models, see~\cite{Chr84}. In the present case the action of the diffusion forces prevents the formation 
of naked singularities by an explicit and compelling regularizing effect: it forces the dust interior to be spatially homogeneous. 
 
%When considering the gravitational collapse problem it is useful to assume that the origin of time has been shifted so that the time $t=t_0$ in Theorem~\eqref{theo}(iii) coincides with $t=0$. Hence we introduce the following special class $\mathcal{I}^0_\mathrm{col}\subset\mathcal{I}_\mathrm{col}$ of initial data generating collapsing models:
%\[
%\mathcal{I}^0_\mathrm{col}=\{(a_0,\dot a_0,\rho_0,\phi_0)\in\mathcal{I}: \dot a_0=0, \phi_0\leq 0\}.
%\] 
%We call such collapsing models ``initially at rest". Note that $\dot{a}(t),\phi(t)<0$, for all $t\in(0,t_s)$ holds for initially at rest collapsing models. 
Let us recall the definition of some important geometrical/physical quantities associated to the metric~\eqref{interiormetric}. Let
\begin{equation}\label{chi}
\chi(t,R)=1-kR^2-\dot a(t)^2R^2=(\sqrt{1-kR^2}+\dot{a}(t)R)(\sqrt{1-kR^2}-\dot{a}(t)R).
\end{equation}
The region in the interior where $\chi<0$ is the region of trapped surfaces, while $\chi>0$ defines the regular interior region. 
The hypersurface $\chi=0$ defines the apparent horizon.
% As the sign of $\dot{a}$ may change during the evolution, more than one apparent horizon may form in general (say this better). 
We define a local mass function $m(t,R)$ through the identity
\begin{equation}\label{mass}
\chi=1-\frac{2m}{r}-\frac{\phi}{3}\,r^2,\quad r(t,R)=a(t)R,
\end{equation}
which reduces to the standard definition of Misner-Sharp mass when $\phi\equiv 0$~\cite{FST96}. By~\eqref{const} we obtain the important identity
\[
m(t,R)=a(t)^3R^3\frac{\rho(t)}{6}.
\]
Moreover by~\eqref{phieq2}-\eqref{rhoeq} we have
\[
\rho(t)a(t)^3=\rho_0a_0^3+3\beta t,
\]
hence the mass function is given by
\begin{equation}\label{dtmassdust}
m(t,R)=m(0,R)+\frac{R^3}{2}\beta t.
\end{equation}
Thus the local mass is conserved only in the absence of diffusion and it is otherwise linearly increasing in time. The latter behavior is intimately connected with the irreversibility of the diffusion process. In fact, using that the entropy $S$ of a dust fluid equals the energy per particle, i.e., $S=\rho/n$ (see~\cite{calogero2}), and due to the conservation of the total number of particles, i.e., $n(t)a(t)^3=n(0)a(0)^3$, we have
\[
S(t)=\frac{\rho(t)}{n(t)}=\frac{\rho(t)a(t)^3}{n(0)a(0)^3}=\frac{S(0)}{m(0,R)}m(t,R),
\]
hence by~\eqref{dtmassdust} and recalling the definition~\eqref{beta} of the parameter $\beta$ we obtain that the entropy is linearly increasing in time:
\[
S(t)=S(0)+\sigma t.
\]
%As the mass of the interior is not conserved, the exterior of the star cannot be static. 
%Moreover for $p>0$ the mass is in general not a monotone function in time, which is another property that distinguishes the model under discussion from the diffusion-free model.
% Note that while in the diffusion-free case only one apparent horizon may form (since either $\dot a>0$ or $\dot a <0$ for all times),  in the presence of diffusion both the hypersurfaces
%It will now be shown that an apparent horizon forms in the evolution of the star while the interior is still regular. 
%\begin{lemma}
%For all collapsing models there exists $t_*\in (0,t_s)$ such that for all $t> t_*$ there exists $R(t)\in (0,R_b)$ such that $\chi(t,R(t))=0$.    Moreover $R(t_1)<R(t_2)$ for $t_2<t_1$ and $R(t)\to 0$ as $t\to t_s^-$.
%\end{lemma}  
%\begin{proof}
%As $\dot a(t)\to -\infty$ as $t\to t_s$, there exists $t_*$ such that $\dot{a}(t)<-(1-kR_b^2)/R_b^2$. Hence $\chi(t,R_b)<0$ for $t>t_*$ and since $\chi(t,0)=1$, there must exist $R(t)$ such that $\chi(t,R(t))=0$, for all $t>t_*$.
%\end{proof}
\subsection{The exterior}
We write the metric on the exterior $V^c$ using Bondi coordinates:
\begin{equation}\label{exteriormetric}
g_\mathrm{ext}=-A B dw^2+2\varepsilon A dwdr+r^2d\Omega^2,
\end{equation}
where $A,B$ are functions of $(w,r)$ and $\varepsilon=\pm 1$. The time coordinate $w$ is the ingoing (advanced) null coordinate for 
$\varepsilon=1$ and the outgoing (retarded) null coordinate for $\varepsilon=-1$. We assume that the comoving boundary $\Sigma$ between the interior and the exterior region is expressed as $r=r_\Sigma(w)$ in Bondi coordinates, i.e.,
 \[
 \Sigma: r=r_\Sigma(w),
 \]
 so that $r>r_\Sigma(w)$ in~\eqref{exteriormetric}. As the boundary $\Sigma$ is assumed to be timelike, we require that
 \begin{equation}\label{timelikecond}
 A(w,r(w))(B(w,r(w))-2\varepsilon\dot{r}_\Sigma(w))>0,\quad \dot{r}_\Sigma=\frac{dr_\Sigma}{dw},
 \end{equation}
i.e., the first fundamental form of $\Sigma$ has the signature $(-,+,+)$. 
Moreover, letting $t=t(w)$ be the transformation of the time coordinate on the boundary, we require $dt/dw>0$, which means that the time 
orientation of spacetime does not change across the boundary.
 We recall that two metrics $g_\mathrm{int}$ and $g_\mathrm{ext}$ may be matched on $\Sigma$ if and only if they satisfy the junction 
 conditions that they induce the same first and second fundamental form on $\Sigma$, see~\cite{FST96} and references therein.

\begin{theorem}\label{jctheo}
The metrics~\eqref{interiormetric} and~\eqref{exteriormetric} satisfy the junction conditions on the comoving boundary $\Sigma$ if and only if
\begin{itemize}
\item[(a)] There holds
\begin{equation}\label{rsigma}
r_\Sigma(w)=a(t(w))R_b;\\ 
\end{equation}
\item[(b)] The transformation of variable $t=t(w)$ satisfies
\begin{equation}\label{dtdw2}
\dot{t}(w)=A(w,r_\Sigma(w))(\sqrt{1-k R_b^2}-\varepsilon\dot{a}(t(w))R_b);
\end{equation}
\item[(c)] There holds
\begin{equation}\label{B}
B(w,r_\Sigma(w))=A(w,r_\Sigma(w))(1-kR_b^2-\dot a(t(w))^2R_b^2);
\end{equation}
\item[(d)] There holds $Q(w)=0$, where
\begin{align}
Q(w)=&\Bigg[(B-2\varepsilon \dot{r}_\Sigma(w))((B-\varepsilon\dot{r}_\Sigma(w)) \partial_rA+\varepsilon\partial_wA)\nonumber\\
& \quad-A(2\ddot{r}_\Sigma(w)+(B-3\varepsilon\dot{r}_\Sigma(w))\partial_rB-\varepsilon\partial_wB) \Bigg]_{r=r_\Sigma(w)}.\label{Q}
\end{align}
%where $c(w)=\ddot a(t(w))$.
\end{itemize}
\end{theorem}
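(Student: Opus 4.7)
The plan is to apply the standard Darmois--Israel junction conditions on $\Sigma$: the first and second fundamental forms induced from $g_{\mathrm{int}}$ and $g_{\mathrm{ext}}$ must agree. I parametrize $\Sigma$ intrinsically by $(\tau,\theta,\psi)$, identifying $\tau=t$ on the interior side and $\tau=w$ on the exterior side, related through $t=t(w)$ with $\dot t>0$; on $\Sigma$, $dr$ pulls back to $\dot r_\Sigma(w)\,dw$ and $dt$ pulls back to $\dot t(w)\,dw$. Matching the $d\Omega^2$ coefficients of the induced metrics gives $a(t(w))R_b=r_\Sigma(w)$, which is (a) and implies $\dot r_\Sigma=\dot a(t(w))R_b\,\dot t$; matching the $dw^2$ coefficients gives
\[
\dot t(w)^2=A(w,r_\Sigma(w))\bigl(B(w,r_\Sigma(w))-2\varepsilon\dot r_\Sigma(w)\bigr),
\]
which I shall call $(*)$, the positive root being singled out by $\dot t>0$ and the timelike condition~\eqref{timelikecond}.

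For the angular component of the second fundamental form, the spacelike outward unit normal on the interior side is $a^{-1}\sqrt{1-kR_b^2}\,\partial_R$, producing $K^{\mathrm{int}}_{\theta\theta}=aR_b\sqrt{1-kR_b^2}$. On the exterior side, the unit normal covector is $\sqrt{A/(B-2\varepsilon\dot r_\Sigma)}\,(dr-\dot r_\Sigma\,dw)$ and, using $\Gamma^r_{\theta\theta}=-rB/A$ and $\Gamma^w_{\theta\theta}=-\varepsilon r/A$, one finds $K^{\mathrm{ext}}_{\theta\theta}=r_\Sigma(B-\varepsilon\dot r_\Sigma)/\sqrt{A(B-2\varepsilon\dot r_\Sigma)}$. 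Equating the two and using (a) together with $(*)$ gives $B-\varepsilon\dot r_\Sigma=\dot t\sqrt{1-kR_b^2}$; inserting this back into $(*)$ and using $\dot r_\Sigma=\dot a R_b\,\dot t$ produces (b), and then (c) follows from the factorization
\[
1-kR_b^2-\dot a^2 R_b^2=(\sqrt{1-kR_b^2}-\varepsilon\dot a R_b)(\sqrt{1-kR_b^2}+\varepsilon\dot a R_b).
\]

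Finally, because $g^{\mathrm{int}}_{tt}=-1$ is constant and $\Gamma^R_{tt}=0$ on the interior side, $K^{\mathrm{int}}_{tt}=0$; after the coordinate transformation $t\mapsto w$ the remaining junction condition reduces to $K^{\mathrm{ext}}_{ww}=0$. The vector on $\Sigma$ pushing $\partial_\tau$ into the exterior bulk is $T=\partial_w+\dot r_\Sigma\partial_r$, so $K^{\mathrm{ext}}_{ww}=-\hat n_\nu\bigl(T^\mu\nabla_\mu T^\nu\bigr)$, i.e., the projection onto $\hat n$ of the acceleration of the boundary worldline $w\mapsto(w,r_\Sigma(w))$ in spacetime. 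Expanding the six relevant Christoffel symbols $\Gamma^\bullet_{ww},\Gamma^\bullet_{wr},\Gamma^\bullet_{rr}$ of the Bondi metric in terms of $\partial_wA,\partial_rA,\partial_wB,\partial_rB$, adding the $\ddot r_\Sigma\,\partial_r$ contribution from $T^\mu\partial_\mu T^\nu$, and contracting with $\hat n_\nu$, the prefactor $\sqrt{A/(B-2\varepsilon\dot r_\Sigma)}$ factors out, leaving the bracket $Q(w)$ of~\eqref{Q} up to a nonvanishing scalar. The main obstacle is purely computational and lies in this last step: no subtle difficulty arises, but the Christoffel symbols, the acceleration term, and the contraction with $\hat n$ must be organized carefully so that the clean bracketed form~\eqref{Q} emerges without leftover terms.
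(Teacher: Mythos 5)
Your proposal is correct and follows essentially the same route as the paper: computing the first and second fundamental forms induced by both metrics on $\Sigma$, extracting (a) and $\dot t^2=A(B-2\varepsilon\dot r_\Sigma)$ from the induced metrics, (b)--(c) from the angular part of the extrinsic curvature, and (d) from its $dw^2$ part. The only difference is cosmetic — you obtain (b) before (c) by working with the unsquared relation $B-\varepsilon\dot r_\Sigma=\dot t\sqrt{1-kR_b^2}$, whereas the paper squares it to get (c) first — and your intermediate formulas (normals, Christoffel symbols, $K^{\mathrm{ext}}_{\theta\theta}$) agree with the paper's.
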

\begin{proof}
The first and second fundamental forms induced by the interior metric~\eqref{interiormetric} on $\Sigma$ are given respectively by 
\[
\bar{g}_\mathrm{int}=-dt^2+a(t)^2R_b^2d\Omega^2,\qquad
\bar{K}_\mathrm{int}=R_b\sqrt{1-kR_b^2}\,a(t)d\Omega^2,
\]
while the same quantities induced by the exterior metric~\eqref{exteriormetric} are 
\[
\bar{g}_\mathrm{ext}=-A(w,r_\Sigma(w))\Big[B(w,r_\Sigma(w))-2\varepsilon \dot{r}_\Sigma(w)\Big]dw^2+r_\Sigma(w)^2d\Omega^2,
\]
\begin{align*}
\bar{K}_\mathrm{ext}&=-\frac{Q(w)}{2[A(w,r_\Sigma(w))(B(w,r_\Sigma(w))-2\varepsilon\dot{r}_\Sigma(w))]^{1/2}} dw^2\\[0.5mm]
&\quad +\frac{r_\Sigma(w)(B(w,r_\Sigma(w))-\varepsilon \dot{r}_\Sigma(w))}{[A(w,r_\Sigma(w))(B(w,r_\Sigma(w))-2\varepsilon  \dot{r}_\Sigma(w))]^{1/2}}d\Omega^2.
\end{align*}
%where
%\begin{align*}
%Q(w)=&\Bigg[(B-2\varepsilon \dot{r}_\Sigma(w))((B-\varepsilon\dot{r}_\Sigma(w)) \partial_rA+\varepsilon\partial_wA)\\
%& \quad+A(2\ddot{r}_\Sigma(w)+(B-3\varepsilon\dot{r}_\Sigma(w))\partial_rB-\varepsilon\partial_wB) \Bigg]_{r=r_\Sigma(w)},
%\end{align*}
Hence the junction condition $\bar{g}_\mathrm{int}=\bar{g}_\mathrm{ext}$ gives immediately~\eqref{rsigma} as well as
\begin{equation}
\dot{t}(w)=\Big[ A(w,r_\Sigma(w))\Big(B(w,r_\Sigma(w))-2\varepsilon \dot{r}_\Sigma(w)\Big)\Big]^{1/2}\label{dtdw}.
\end{equation}
The junction condition $\bar{K}_\mathrm{int}=\bar{K}_\mathrm{ext}$ is equivalent to $Q(w)=0$ and
\[
\frac{r_\Sigma(w)(B(w,r_\Sigma(w))-\varepsilon \dot{r}_\Sigma(w))}{[A(w,r_\Sigma(w))(B(w,r_\Sigma(w))-2\varepsilon \dot{r}_\Sigma(w))]^{1/2}}=R_ba(t(w))\sqrt{1-kR_b^2}.
\]
Using~\eqref{rsigma} in the latter equation we obtain
\[
B(w,r_\Sigma(w))[B(w,r_\Sigma(w))-2\varepsilon\dot{r}_\Sigma(w)]+\dot{r}_\Sigma(w)^2=(1-kR_b^2)A(w,r_\Sigma(w))[B(w,r_\Sigma(w))-2\varepsilon\dot{r}_\Sigma(w)].
\]
In the term $\dot{r}_\Sigma(w)^2$ we use
\[
\dot{r}_\Sigma(w)=\frac{dr_\Sigma}{dt}\frac{dt}{dw}=\dot a(t(w))R_b\Big[ A(w,r_\Sigma(w))\Big(B(w,r_\Sigma(w))-2\varepsilon \dot{r}_\Sigma(w)\Big)\Big]^{1/2}
\]
and so doing we obtain~\eqref{B}. Replacing~\eqref{rsigma} and~\eqref{B} into~\eqref{dtdw} gives~\eqref{dtdw2}. 
%Finally,~\eqref{A} is obtained by replacing~\eqref{B} and~\eqref{cavolo} into the equation $Q(w)=0$.
\end{proof}
In principle, any exterior metric that satisfies the conditions in the theorem can be matched to the interior. However a particularly 
simple and natural choice can be made as follows. We assume that $A(w,r)=1$ in the whole exterior, so that~\eqref{B} gives 
\begin{equation}\label{temporalino}
B(w,r_\Sigma(w))=1-kR_b^2-\dot{a}(t(w))^2R_b^2.
\end{equation}
%Replacing~\eqref{temporalino} and the second equation in~\eqref{rsigma} into~\eqref{dtdw} gives 
%\begin{equation}\label{dtdw2}
%\dot{t}(w)=(\sqrt{1-kR_b^2}-\varepsilon\dot{a}(t(w))R_b).
%\end{equation}
Using~\eqref{chi} and~\eqref{mass} in~\eqref{temporalino} we obtain
\begin{equation}\label{B2}
B(w,r_\Sigma(w))=1-\frac{2m(t(w),R_b)}{r_\Sigma(w)}-\frac{\phi(t(w))}{3}r_\Sigma(w)^2.
\end{equation}
%where we set
Moreover the junction condition (d) in Theorem~\ref{jctheo} becomes
\[
\Bigg[2\ddot{r}_\Sigma(w)+(B-3\varepsilon\dot{r}_\Sigma(w))\partial_rB-\varepsilon\partial_wB \Bigg]_{r=r_\Sigma(w)}=0.
\]
Using~\eqref{friedmann} and~\eqref{B2} we may write the latter as
%\begin{equation}\label{cavolo}
%\dot{r_\Sigma}=-\dot{a}(t(w))R_b\Big(\sqrt{1-kR_b^2}+\varepsilon\dot{a}(t(w))R_b\Big).
%\end{equation}
% Using, we can rewrite~\eqref{B},~\eqref{A} as
%\begin{equation}\label{B2}
%B(w,r_\Sigma(w))=1-\frac{2m(t(w),R_b)}{r_\Sigma(w)}-\frac{r_\Sigma(w)^2}{3}\phi(t(w)),
%\end{equation}
%\begin{equation}\label{A2}
%\Bigg[-\varepsilon\partial_wA-\sqrt{1-kR_b^2}\big(\sqrt{1-kR_b^2}+R_b b\big)\partial_rA-A\Big(\phi(t(w))r -\frac{1}{r}+\big(\frac{B}%{r}+\partial_rB\big)\Big)\Bigg]_{|_{r=r_\Sigma(w)}}=0.
%\end{equation}
%~\eqref{A2} becomes
\begin{equation}\label{tempino}
\Bigg[\frac{2}{r}-\frac{6m(t(w),R_b)}{r^2}-\frac{2B}{r}+\partial_rB\Bigg]_{r=r_\Sigma(w)}=0.
\end{equation}
If we now require~\eqref{tempino} to be valid for {\it all} $r>r_\Sigma(w)$, and not only on $\Sigma$, and solve the resulting differential equation on $B$ subject to the boundary condition~\eqref{B2} at $r=r_\Sigma(w)$ we obtain
\[
B(w,r)=1-\frac{2m(t(w),R_b)}{r}-\frac{\Lambda(w)}{3}\,r^2,
\]
where
\begin{equation}\label{cosmoconst}
\Lambda(w)=\phi(t(w)).
\end{equation}
Hence we obtain the following corollary.
\begin{corollary}
The exterior metric can be chosen to be the Vaidya type metric with variable cosmological constant given by
\begin{equation}\label{vaidya}
g_\mathrm{ext}=-\big(1-\frac{2M(w)}{r}-\frac{\Lambda(w)}{3}\,r^2\big) dw^2+2\varepsilon dwdr +r^2d\Omega^2,
\end{equation}
where $\Lambda(w)$ is given by~\eqref{cosmoconst} and 
\begin{equation}\label{massexterior}
M(w)=m(t(w),R_b)
\end{equation}
is the exterior mass function.
\end{corollary}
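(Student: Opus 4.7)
My plan is to leverage the junction condition (d) from Theorem~\ref{jctheo}, which under the simplification $A(w,r)\equiv 1$ reduces to the scalar equation~\eqref{tempino} enforced on the boundary $r=r_\Sigma(w)$. The strategy is to extend this boundary equation into a differential equation for $B(w,r)$ valid throughout the exterior, solve it explicitly in the radial variable, and recognize the resulting metric as the advertised Vaidya form with a variable cosmological constant.

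Concretely, I would first set $A\equiv 1$ in the exterior, so that the junction condition~\eqref{B} becomes~\eqref{temporalino}, which via the definition of the mass function~\eqref{mass} is just~\eqref{B2}. Substituting this expression for $B|_{\Sigma}$ into the remaining junction condition and combining with the dust Friedmann equation~\eqref{friedmann} produces the compatibility relation~\eqref{tempino}. Promoting this from an identity on $\Sigma$ to one valid for every $r>r_\Sigma(w)$ yields, with $w$ acting as a parameter, the first-order linear ODE in $r$
\begin{equation*}
\partial_r B-\frac{2}{r}B=-\frac{2}{r}+\frac{6\,m(t(w),R_b)}{r^2}.
\end{equation*}

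Next, I would integrate this ODE using $r^{-2}$ as an integrating factor, rewriting the equation as $\partial_r(B/r^2)=-2r^{-3}+6\,m(t(w),R_b)r^{-4}$. A single integration in $r$ yields $B(w,r)=1-\tfrac{2m(t(w),R_b)}{r}+C(w)\,r^2$ for some function $C(w)$ of integration, and enforcing the boundary condition~\eqref{B2} at $r=r_\Sigma(w)$ fixes $C(w)=-\phi(t(w))/3$. Setting $M(w):=m(t(w),R_b)$ and $\Lambda(w):=\phi(t(w))$ gives precisely~\eqref{vaidya}, and $t(w)$ is determined by~\eqref{dtdw2}, now reading $\dot{t}(w)=\sqrt{1-kR_b^2}-\varepsilon\dot a(t(w))R_b$.

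The only delicate point is conceptual rather than technical: the \emph{choice} to promote a single scalar boundary constraint to a radial ODE on the whole exterior, which selects one natural member from a larger family of admissible matching exteriors. Once that choice is made, the remaining junction conditions (a)--(c) impose no additional restrictions, since (a) defines $r_\Sigma$, (c) fixes $B$ at $\Sigma$ consistently with the ODE solution just obtained, and (b) is a first-order ODE for $t(w)$ that can be solved after the metric has been specified. Thus the main obstacle is simply verifying that this natural radial extension is internally consistent; the remainder is a short linear ODE computation.
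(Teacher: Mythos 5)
Your proposal is correct and follows essentially the same route as the paper: set $A\equiv 1$, reduce condition (d) to~\eqref{tempino} via~\eqref{friedmann} and~\eqref{B2}, promote it to a radial ODE on the whole exterior, and integrate with the boundary condition~\eqref{B2} to obtain~\eqref{vaidya}. The only difference is that you spell out the integrating-factor computation and the consistency of conditions (a)--(c), which the paper leaves implicit.
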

Conversely, if we now start with a metric of the form~\eqref{vaidya} and impose the matching conditions~\eqref{B2}-\eqref{tempino}, we obtain the identities $M(w)=m(t(w),R_b)$ and $\Lambda(w)=\phi(t(w))$, that is to say, the mass function and the cosmological scalar field must be continuous through the boundary in order for the metric~\eqref{vaidya} to be an admissible exterior.

%The metric coefficient $B=B(w,r)=1-\frac{2M(w)}{r}-\frac{r^2}{3}\Lambda(w)$ in~\eqref{vaidya} plays in the exterior region the same role as the function $\chi(t,R)$ defined by~\eqref{chi} in the interior. Hence  
The generalized Vaidya metric~\eqref{vaidya} solves the Einstein equation~\eqref{EinsteinGEN}
with cosmological scalar field $\phi=\Lambda(w)$ and the energy-momentum tensor
\[
T_\mathrm{ext}=\tilde{\rho}\,dw^2,\quad \tilde{\rho}=\frac{\varepsilon}{r^2}\Big(2\frac{dM}{dw}+\frac{r^3}{3}\frac{d\Lambda}{dw}\Big),
\]
where, by~\eqref{phieq2},~\eqref{dtmassdust} and~\eqref{dtdw2},
\begin{align}
&\frac{dM}{dw}=\frac{\beta R_b^3}{2}(\sqrt{1-kR_b^2}-\varepsilon\dot{a}(t(w))R_b),\label{dmdw}\\[0.3cm]
&\frac{d\Lambda}{dw}=-\frac{3\beta}{a(t(w))^3}(\sqrt{1-kR_b^2}-\varepsilon\dot{a}(t(w))R_b).\label{dldw}
\end{align}
When $\beta=0$, i.e., in the absence of diffusion, the functions $M$ and $\Lambda$ become constant and so $T_\mathrm{ext}=0$. 
In this limit our model reduces to the Oppenheimer-Snyder model with cosmological constant~\cite{nakao}.
The energy-momentum tensor $T_\mathrm{ext}$ describes a cloud of dust particles moving along the null directions orthogonal to $w=const$. 
Observe that the dust particles in the exterior {\it do not } undergo diffusion in the cosmological scalar field. 
This outcome of the model is consistent with the well-known fact that particles moving along null directions are not 
subject to diffusion, see~\cite{dudley}. Requiring the weak energy condition $\tilde{\rho}>0$ to hold in the whole exterior region 
forces us to restrict to the outgoing Vaidya solution.
\begin{proposition}\label{weakencon}
$\tilde{\rho}(w,r_\Sigma(w))=0$. Moreover the weak energy condition $\tilde{\rho}(w,r)>0$, for all $r>r_\Sigma(w)$, holds only for the outgoing Vaidya metric~\eqref{vaidya}$_{\varepsilon=-1}$.
\end{proposition}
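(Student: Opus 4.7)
The plan is to compute $\tilde\rho$ explicitly by substituting the formulas~\eqref{dmdw} and~\eqref{dldw} for $dM/dw$ and $d\Lambda/dw$ into the definition of $\tilde\rho$, and then analyze the sign of the resulting expression. Both $dM/dw$ and $d\Lambda/dw$ share the common factor
\[
W(w):=\sqrt{1-kR_b^2}-\varepsilon\dot a(t(w))R_b,
\]
which is precisely the expression appearing in the transformation law~\eqref{dtdw2} for $\dot t(w)$. Factoring $W$ out, the combination $2\,dM/dw+(r^3/3)\,d\Lambda/dw$ collapses to $\beta W(R_b^3-r^3/a(t(w))^3)$, and therefore
\[
\tilde\rho(w,r)=\frac{\varepsilon\,\beta\, W(w)}{r^2\,a(t(w))^3}\bigl(R_b^3\,a(t(w))^3-r^3\bigr).
\]

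From this explicit form the first claim is immediate: along $\Sigma$ we have $r_\Sigma(w)=a(t(w))R_b$ by~\eqref{rsigma}, so the factor $R_b^3 a^3-r^3$ vanishes identically on the boundary, hence $\tilde\rho(w,r_\Sigma(w))=0$.

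For the second claim I would argue on the signs of the individual factors in the exterior $r>r_\Sigma(w)$. Clearly $\beta>0$ by~\eqref{beta}, since $\sigma,n(0),a(0)>0$; the factor $R_b^3 a^3-r^3$ is strictly negative for $r>r_\Sigma(w)=aR_b$; and $a^3,r^2>0$. The sign of $W(w)$ is fixed by the requirement, imposed right before Theorem~\ref{jctheo}, that the time orientation be preserved across $\Sigma$, i.e.\ $\dot t(w)>0$: with $A\equiv 1$, equation~\eqref{dtdw2} gives $\dot t(w)=W(w)$, so $W>0$. Combining these signs, $\tilde\rho$ has the opposite sign of $\varepsilon$ throughout the exterior.

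Consequently the weak energy condition $\tilde\rho>0$ for all $r>r_\Sigma(w)$ forces $\varepsilon=-1$, i.e.\ the outgoing Vaidya branch~\eqref{vaidya}$_{\varepsilon=-1}$. There is no real obstacle to this argument beyond keeping track of the algebra when factoring $W$ out of the linear combination of derivatives; the only conceptual point worth emphasizing is that the constraint $W>0$, which selects the admissible sign, comes from a separate input (time orientation) rather than from the junction conditions themselves.
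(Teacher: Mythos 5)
Your proposal is correct and follows essentially the same route as the paper: the paper's proof consists precisely of the factorization $\tilde\rho(w,r)=\frac{\varepsilon\beta\,\dot t(w)}{r^2 a(t(w))^3}\bigl[r_\Sigma^3(w)-r^3\bigr]$ obtained from~\eqref{dmdw} and~\eqref{dldw}, from which the two claims follow by the same sign considerations you give. Your write-up merely makes explicit the step the paper leaves as ``immediate,'' namely that $\dot t(w)>0$ (time-orientation preservation) fixes the sign of the common factor $W$.
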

\begin{proof}
By~\eqref{dmdw} and~\eqref{dldw},
\begin{equation}\label{tac}
\tilde{\rho}(w,r)=\frac{\varepsilon\beta\frac{dt}{dw}}{r^2a(t(w))^3}\big[r^{3}_\Sigma(w)-r^3\big],\quad r\geq r_\Sigma(w),
\end{equation}
by which the result follows immediately.
\end{proof}
We remark that the property $\tilde{\rho}(w,r_\Sigma(w))=0$ implies that the surface of the star is not radiating, i.e., the star is thermally isolated from the exterior. In particular  the null dust particles in the exterior are not emanated by the star, but rather they  are created spontaneously by the 
decaying vacuum energy field $\Lambda$ in the exterior (see~\cite{calogero2} for a similar behavior occurring in cosmology). 
Note also that for $\varepsilon=+1$ we have $\tilde{\rho}>0$ for $r<r_{\Sigma}(w)$. Hence if we consider 
a spatially homogeneous exterior, the interior can be chosen to be the ingoing Vaidya metric, leading to a void model with 
diffusion which generalises the well-known Einstein-Strauss model~\cite{ES45}. 

It is worth to point out the following important differences between our model and the analogous diffusion-free model studied 
in~\cite{FJLS,BOS89}. Firstly, in the diffusion-free model the Vaidya solution can arise as the exterior of a fluid ball only if the 
pressure of the fluid is positive (dust is not allowed). Secondly the matching boundary in the diffusion-free model cannot be 
comoving. The latter property can be naturally regarded as being the cause of energy dissipation from the surface of the star 
as seen by comoving observers, see the discussion in~\cite{FJLS,BOS89}. 
%This property follows by the fact that the surface of the star has be chosen to be comoving. 
%We also recall that in the absence of diffusion a homogeneous star made of dust cannot radiate  (see e.g.~\cite{BOS89}).  

In view of Proposition~\ref{weakencon} we restrict from now on to the generalized outgoing Vaidya metric~\eqref{vaidya}$_{\varepsilon=-1}$, which we rewrite in standard notation as
\begin{equation}\label{vaidya2}
g_\mathrm{ext}=-\big(1-\frac{2M(u)}{r}-\frac{\Lambda(u)}{3}r^2\big) du^2-2 dudr +r^2d\Omega^2.
\end{equation}
The apparent horizons of the metric~\eqref{vaidya2} are the hypersurfaces in the exterior where 
\[
B(u,r)=\big(1-\frac{2M(u)}{r}-\frac{\Lambda(u)}{3}r^2\big)=0.
\]
As $B(u,r_\Sigma(u))=1-kR_b^2-\dot a (t(u))^2R_b^2$, see~\eqref{B}, and owing to~\eqref{chi}, the interior and exterior apparent horizons intersect on the boundary. 
% By Lemma we have the following
%\begin{lemma}
%For collapsing model with initial data in the class $\mathcal{I}_\mathrm{col}^0$, the apparent horizon in the exterior forms and it is spacelike 
%\end{lemma}
%\begin{proof}
%It is spacelike. In fact, up to positive constants,  $N\propto \partial_u B du+\partial_r B dr$ and $g(N,N)\propto -\tilde{\rho}/r^2\left(M(u)/r^2-1/3r\Lambda(u)\right)$. Since $\tilde{\rho}>0$ and $\Lambda<0$, then $g(N,N)<0$, i.e., the apparent horizon is spacelike. 
%\end{proof}

\section{Self-similar solutions}
In this section we study in detail the causal structure of a particular solution to our model. Recall that the interior is assumed to be pressureless. Assuming in addition that it is self-similar, we obtain the following solution of~\eqref{EinsteinGEN}:
\begin{subequations}\label{sss}
\begin{align}
&a(t)=\delta_k t,\\
&\phi(t)=\frac{3\beta}{2\delta_k}a(t)^{-2},\\
&\rho(t)=\frac{3\beta}{\delta_k}a(t)^{-2},
\end{align}
where $\delta_k$ is the real (positive) solution of the polynomial equation
\begin{equation}\label{poleq}
\delta^3+k\delta-\frac{3\beta}{2}=0.
\end{equation} 
\end{subequations}
% Note that
% \[
% \delta_k^2+k=\frac{3\beta}{2\delta_k}>0,
% \]
% which we use several times below.
The Penrose diagrams of this solution are given in Figure~\ref{fig1}.
%Self-similar solutions of the Einstein equations are important as they may act as attractors/repellers of general solutions (asymptotic self-similarity) and affect their intermediate behavior in time.  
It has been shown in~\cite{AC15} that, for an open set of initial data, solutions of both types 
$\mathcal{I}_\mathrm{exp}$ and $\mathcal{I}_\mathrm{col}$ have an intermediate behavior that is close to the special solution~\eqref{sss}.

\begin{figure}[ht!]
\begin{center}
\psfrag{i--}[cc][cc][0.8][0]{$i^{-}$}
\psfrag{i-}[cc][cc][0.8][0]{$i^{-}\,/\,\mathscr{I^-}(t=0)$}
\psfrag{i0}[cc][cc][0.8][0]{$i^{0}$}
\psfrag{i++}[ll][cc][0.8][0]{$i^{+}$}
\psfrag{i+}[cc][cc][0.8][0]{$i^{+}\,/\,\mathscr{I^+}(t=\infty)$}
\psfrag{R}[rc][cc][0.8][00]{$R=0$}
\psfrag{H}[ll][cc][0.8][90]{$R=R_{AH}$}
\psfrag{t}[ll][bb][0.8][0]{$\mathscr{I^-}(t=0)$}
\psfrag{J}[ll][bb][0.8][0]{$\mathscr{I^+}(t=\infty)$}
\subfigure[Conformal diagram and bounded conformal diagram for $k=1$.
%$i^{\pm}$ corresponds to the infinities $\eta=\pm\infty$ ($t=0$, $t=\infty$). 
The solid lines correspond to the boundary $R=0$  %$\psi=0$ and $\psi=\pi$, 
and the dashdotted line to the equator $R=1$. The remaining dotted lines are curves of constant $R$, for $0<R<1$,
while the dashed lines represent the apparent horizon at $R=R_{AH}$.
 %$\psi=\frac{\pi}{2}$ $(r=a)$. 
%The next figure shows the %$(\tilde{\eta},\tilde{\psi})$ 
%with  
In this case a suitable matching surface is given by the curve $R=R_b<1$.
%The thick solid line represents a Big Bang singularity.
]
{\includegraphics[width=0.45\textwidth]{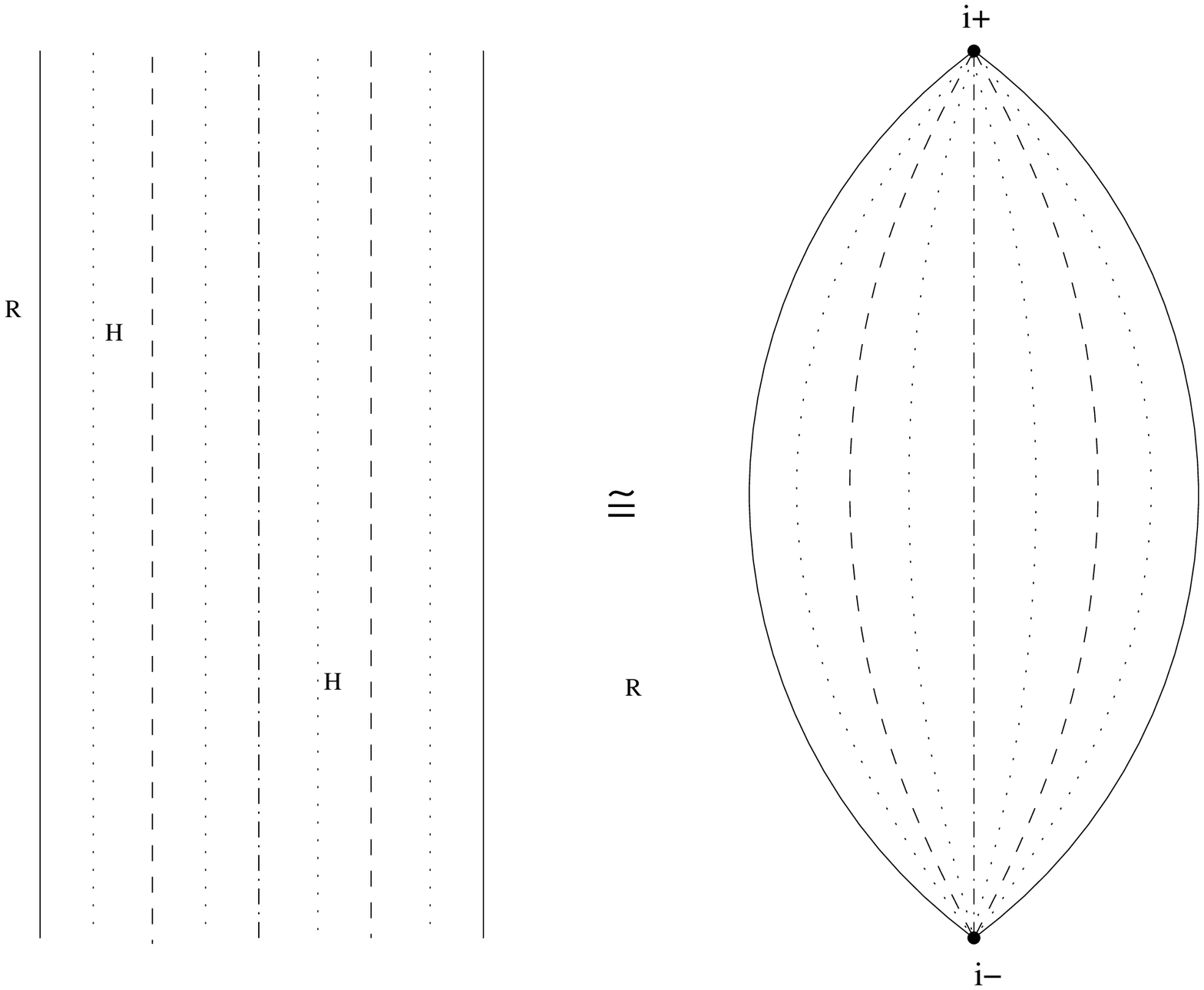}\label{Closed}}\qquad\qquad\qquad
% \psfrag{F+}[cc][cc][0.8][0]{$\mathbf{\mathrm{F}_+}$}
% \psfrag{F-}[cc][cc][0.8][0]{$\mathbf{\mathrm{F}_-}$}
% \psfrag{dS+}[cc][cc][0.8][0]{$\mathbf{\mathrm{dS}_+}$}
% \psfrag{dS-}[cc][cc][0.8][0]{$\mathbf{\mathrm{dS}_-}$}
% \psfrag{S}[cc][cc][0.8][0]{$\mathbf{\mathrm{S}_{(0)}}$}
 \subfigure[Bounded conformal diagram $k=0$ and $k=-1$. The dotted lines are curves of constant $R$ and the dashed line is the 
 apparent horizon $R=R_{AH}$. The thick solid line corresponds to a Big-Bang type (null) singularity.]
 {\includegraphics[width=0.22\textwidth]{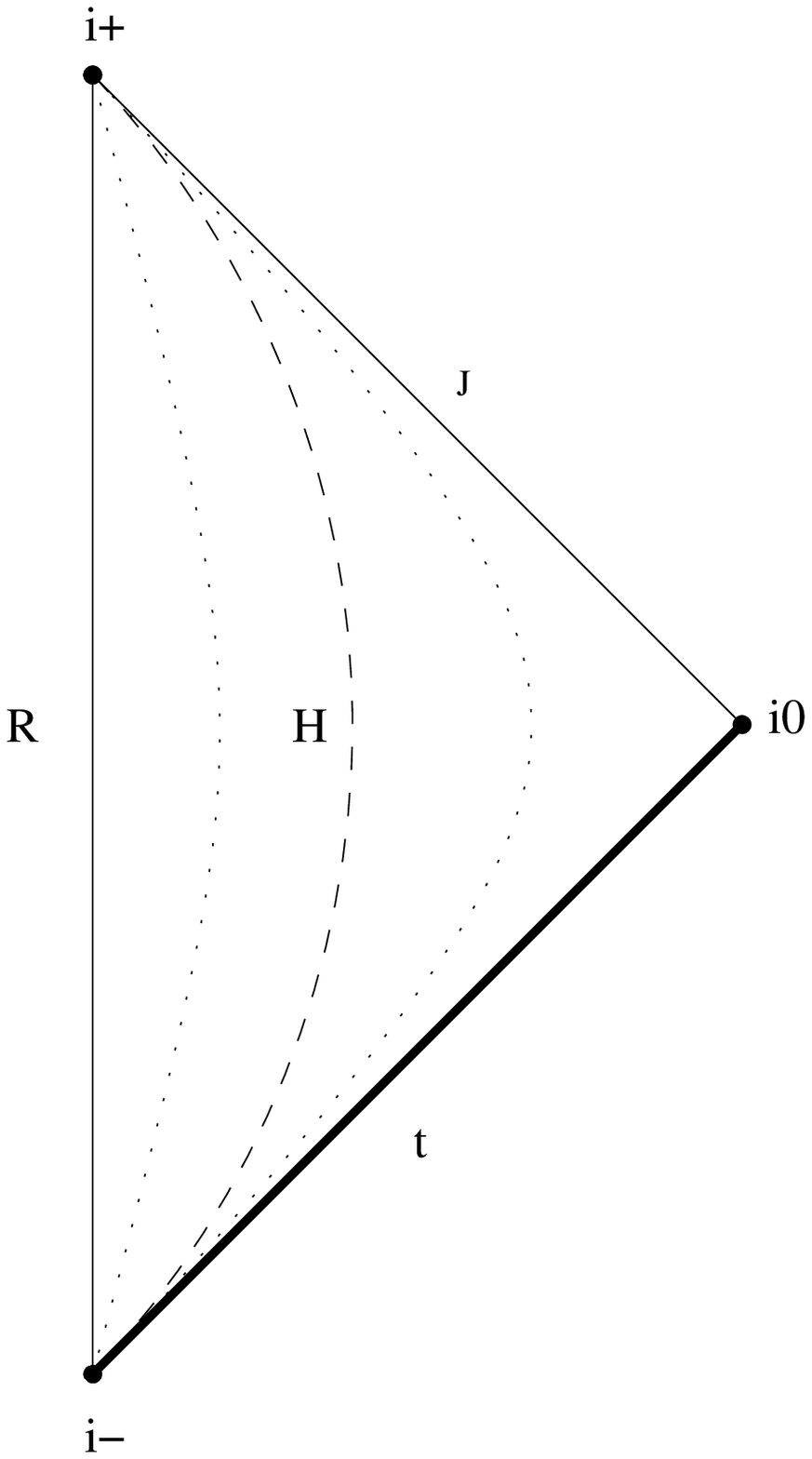}\label{RW_OpenFlat}}
% \psfrag{S-}[cc][ll][0.8][0]{\quad$\mathbf{\mathrm{S}_{(\!-\!1)}}$}
% \psfrag{$S_{(+1)}$}[rr][cr][0.8][0]{\quad$\mathbf{\mathrm{S}_{(\!+\!1)}}$}
% \psfrag{A}[lc][cc][0.8][0]{\A}
% \psfrag{B}[cc][cc][0.8][0]{\B}
\end{center}
\caption{Penrose diagrams for the expanding (at constant rate) interior solution. 
Each point represents a 2-sphere of radius $r=a(t)R$. As usual, $i^{-}$ and $i^{+}$ represent past and 
future timelike infinity respectively, and $i^{0}$ corresponds to spacelike infinity.
Also, $\mathscr{I^-}$, $\mathscr{I^+}$ denote past and future null infinity respectively.}\label{fig1}
\end{figure}

Note that we shifted the origin of time so that the singularity appears at $t=0$. 
This is a curvature singularity as the Ricci scalar curvature $\mathcal{R}_\mathrm{int}$ satisfies
\[
\mathcal{R}_\mathrm{int}=\frac{9\beta}{\delta_k}a(t)^{-2}\to +\infty \quad\text{as $a(t)\to 0^+$.}
\]
Toward the future the solution is forever expanding ($\dot a>0$) without acceleration ($\ddot a=0$). There is only one apparent horizon for this solution, located at
\begin{equation}\label{interiorAH}
R_\mathrm{AH}=\frac{1}{\sqrt{\delta_k^2+k}}.
\end{equation}
%The local mass of the interior is 
%\[
%m(t,R)=\frac{\beta}{2}R^3 t,
%\]
%hence $m(t,R)\to +\infty$ as $t\to +\infty$. 
Let $R=R_b$ the boundary of the star. We distinguish three cases:
\begin{itemize}
\item[(i)]  $R_b>R_\mathrm{AH}$; the interior has an apparent horizon in this case.
\item[(ii)] $R_b=R_\mathrm{AH}$; the boundary of the star coincides with the apparent horizon.
\item[(iii)] $R_b<R_\mathrm{AH}$; the interior has no apparent horizon.
\end{itemize}
Now, let $r=r_\Sigma(u)$ the matching surface as seen by exterior observers. We assume that the exterior metric is given by~\eqref{vaidya2}. From~\eqref{rsigma},~\eqref{dtdw2} we obtain
\[
u=C_k t,\quad r_\Sigma(u)=x_\Sigma u,\quad x_\Sigma=\frac{R_b\delta_k}{C_k},\quad C_k=\frac{1}{\sqrt{1-kR_b^2}+\delta_kR_b} >0,
\] 
and~\eqref{vaidya2} becomes
\begin{equation}\label{vaidya3}
g_\mathrm{ext}=-\Big(1-\lambda_1\frac{u}{r}-\lambda_2\frac{r^2}{u^2}\Big)du^2-2dudr+r^2d\Omega^2,\quad r>x_\Sigma u,
\end{equation}
where
\[
\lambda_1=\frac{\beta R_b^3}{C_k},\quad \lambda_2=\frac{\beta C_k^2}{2\delta_k^3}.
\]
The mass and the cosmological scalar field in the exterior are given by 
\[
M(u)=\lambda_1u/2,\quad \Lambda(u)=3\lambda_2/u^2.
\] 
Note also that the metric~\eqref{vaidya3} has a curvature singularity at $u=0$, for its Kretschmann scalar $K=\mathrm{Riem}^2$ is given by
\[
K=\frac{12\lambda_1^2u^2}{r^6}+\frac{24\lambda_2^2}{u^4}.
\]

The apparent horizons of the metric~\eqref{vaidya3} are the hypersurfaces where $B(u,r)=0$, where
\begin{equation}\label{hdef}
B(u,r)=h\big(\frac{r}{u}\big),\quad h(x)=1-\frac{\lambda_1}{x}-\lambda_2 x^2.
\end{equation}
\begin{proposition}
The following holds:
\begin{itemize}
\item[(1)] In case (i) there is no apparent horizon in the exterior region and $B(u,r)<0$ for all $r>r_\Sigma(u), u>0$.
\item[(2)] In case (ii) the apparent horizon of the metric~\eqref{vaidya3} coincides with the apparent horizon of the interior, as well as with the matching surface:
\[
r_\mathrm{AH}(u)=r_\Sigma(u)\equiv R_\mathrm{AH}=R_b.
\]
Moreover $B(u,r)<0$ for all $r>r_\Sigma(u)$.
\item[(3)] In case (iii) there exists $x_\mathrm{AH}>x_\Sigma$ such that the metric~\eqref{vaidya} has an apparent horizon at $r=x_\mathrm{AH}u$. Moreover $B(u,r)>0$ for $r_\Sigma(u)<r<x_\mathrm{AH}u$ and $B(u,r)<0$ for $r>x_\mathrm{AH}u$.
\end{itemize}
\end{proposition}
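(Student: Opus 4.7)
The plan is to reduce everything to a one-variable study of the scalar function $h(x)=1-\lambda_1/x-\lambda_2 x^2$ that encodes $B$, and to exploit a remarkable coincidence: the unique critical point of $h$ on $(0,\infty)$ lies exactly at $x=x_\Sigma$. Once this is established, the three cases follow from reading off the sign of $h(x_\Sigma)$.

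First I would compute the boundary value $h(x_\Sigma)=B(u,r_\Sigma(u))$. Using the matching identity (c) of Theorem~\ref{jctheo}, $A\equiv 1$, and $\dot a(t)=\delta_k$, we get
\begin{equation*}
B(u,r_\Sigma(u))=1-kR_b^2-\delta_k^2R_b^2=1-(\delta_k^2+k)R_b^2=1-\frac{R_b^2}{R_\mathrm{AH}^2},
\end{equation*}
where the last equality uses the explicit form \eqref{interiorAH} of $R_\mathrm{AH}$. Thus $h(x_\Sigma)<0$ in case (i), $h(x_\Sigma)=0$ in case (ii), and $h(x_\Sigma)>0$ in case (iii). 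This already pins down one value; the rest of the proof is purely about the shape of $h$ on $[x_\Sigma,\infty)$.

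Next I would study $h$ as a function of $x>0$. Since $h''(x)=-2\lambda_1/x^3-2\lambda_2<0$, the function $h$ is strictly concave, goes to $-\infty$ at both ends of $(0,\infty)$, and has a unique maximum at the root $x_\ast$ of $h'(x)=\lambda_1/x^2-2\lambda_2 x=0$, i.e., $x_\ast=(\lambda_1/(2\lambda_2))^{1/3}$. Plugging in the definitions of $\lambda_1,\lambda_2$ gives
\begin{equation*}
\frac{\lambda_1}{2\lambda_2}=\frac{\beta R_b^3/C_k}{\beta C_k^2/\delta_k^3}=\Big(\frac{R_b\delta_k}{C_k}\Big)^3=x_\Sigma^3,
\end{equation*}
so $x_\ast=x_\Sigma$. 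The key point is therefore that $h$ attains its global maximum precisely at the boundary $x=x_\Sigma$ and is strictly decreasing on $(x_\Sigma,\infty)$ with $h(x)\to-\infty$ as $x\to\infty$.

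Combining the two ingredients yields the three cases immediately. In case (i), $h(x_\Sigma)<0$ combined with monotonicity forces $h<0$, hence $B<0$, on all of $[x_\Sigma,\infty)$, and in particular no apparent horizon exists in the exterior. In case (ii), $h(x_\Sigma)=0$ is the global maximum, so $h<0$ strictly on $(x_\Sigma,\infty)$; the only zero of $B$ occurs at $r=r_\Sigma(u)=R_b\delta_k u/C_k$, which coincides with the interior apparent horizon because $R_b=R_\mathrm{AH}$. In case (iii), $h(x_\Sigma)>0$ and $h$ is continuous, strictly decreasing, and tends to $-\infty$; the intermediate value theorem produces a unique $x_\mathrm{AH}>x_\Sigma$ with $h(x_\mathrm{AH})=0$, and the sign of $h$ before and after $x_\mathrm{AH}$ is then determined by monotonicity, which is exactly the claim about $B(u,r)$. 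The only step that is not entirely formal is the identification $x_\ast=x_\Sigma$; the rest is convex-analysis bookkeeping.
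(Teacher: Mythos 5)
Your proof is correct and follows exactly the paper's (very terse) argument: the authors likewise reduce everything to the observations that $h$ attains its maximum at $x=x_\Sigma$ and that $h(x_\Sigma)=1-(R_b/R_\mathrm{AH})^2$. You have simply supplied the computations (the identity $\lambda_1/(2\lambda_2)=x_\Sigma^3$, the concavity of $h$, and the evaluation of $h(x_\Sigma)$ via the junction condition) that the paper leaves implicit.
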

\begin{proof}
For the proof it suffices to notice that the function $h(x)$ in~\eqref{hdef}
attains its maximum at $x=x_\Sigma$ and $h(x_\Sigma)=1-(R_b/R_\mathrm{AH})^2$. 
\end{proof}
The model under discussion is self-similar, with the lines $r=xu$, $x>x_\Sigma$, on the $(t,r)$-plane being tangent to the homothetic vector field in the exterior. We call such curves homothetic curves.
\begin{theorem}
There exists $x_*>x_\Sigma$ such that the homothetic curve $r=xu$ is spacelike for $x>x_*$, null for $x=x_*$ and timelike for $x_\Sigma<x<x_*$. In case (iii) there holds $x_*>x_\mathrm{AH}$. Moreover the homothetic curve $r=x_* u$ is the first ingoing radial null geodesics that escapes to null-infinity. 
\end{theorem}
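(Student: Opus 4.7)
The plan is first to determine the causal character of a homothetic curve $r=xu$ by computing the norm of its tangent $V=\partial_u+x\partial_r$ in the metric~\eqref{vaidya3}. Reading off $g_{uu}=-B$, $g_{ur}=-1$, $g_{rr}=0$, one obtains
\[
g(V,V)=-B(u,xu)-2x=-h(x)-2x=:-F(x),
\]
so that $r=xu$ is timelike, null, or spacelike exactly when $F(x)$ is positive, zero, or negative.

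The next step is to analyse $F(x)=1+2x-\lambda_1/x-\lambda_2 x^2$ on $(0,\infty)$. A direct computation gives $F''(x)=-2\lambda_1/x^3-2\lambda_2<0$, so $F$ is strictly concave; in addition $F(x)\to-\infty$ as $x\to\infty$, while $F(x_\Sigma)>0$, either by explicit evaluation (yielding $F(x_\Sigma)=1/C_k^2$) or, more conceptually, because the matching boundary $r=x_\Sigma u$ is already required to be timelike by~\eqref{timelikecond}. Concavity together with these two facts produces a unique $x_*>x_\Sigma$ with $F(x_*)=0$, $F>0$ on $[x_\Sigma,x_*)$, $F<0$ on $(x_*,\infty)$, and $F'(x_*)<0$. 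This proves the timelike/null/spacelike trichotomy. For case (iii), the identity $h(x_{AH})=0$ at the exterior apparent horizon immediately yields $F(x_{AH})=2x_{AH}>0$, forcing $x_{AH}<x_*$.

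For the final assertion I would use the fact that, apart from the outgoing family $u=\mathrm{const}$, radial null geodesics in the metric~\eqref{vaidya3} satisfy $dr/du=-B(u,r)/2=-h(r/u)/2$. On $r=x_*u$ this reduces to $x_*=-h(x_*)/2$, which is equivalent to $F(x_*)=0$, so $r=x_*u$ is indeed such a null geodesic. To compare it with its neighbours I set $\xi=r/u$; the equation then becomes the autonomous phase line
\[
\frac{d\xi}{d(\ln u)}=-\frac{F(\xi)}{2},
\]
with equilibrium at $\xi=x_*$. Since $F'(x_*)<0$, this equilibrium is unstable as $u$ increases. A short phase-line analysis then shows that for $\xi_0\in(x_\Sigma,x_*)$, $\xi$ decreases monotonically and hits $x_\Sigma$ at a finite $u$, so the geodesic re-enters the star; while for $\xi_0>x_*$, $\xi$ increases and, because $-F(\xi)/2\sim(\lambda_2/2)\xi^2$ as $\xi\to\infty$, $\xi$ and hence $r$ blow up at a finite value of $u$, so the geodesic reaches $\mathscr{I}^+$ in finite retarded time. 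Thus $r=x_*u$ is precisely the separatrix between infalling and escaping ingoing null geodesics, which is the sense in which it is the first to reach null infinity.

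The main difficulty is this last phase-line step: the causal-character part is purely algebraic, but identifying $r=x_*u$ as the \emph{first} escaping null geodesic requires linearising at the unstable equilibrium and controlling the quadratic blow-up of $\xi$ in finite $u$, which is what ensures arrival at $\mathscr{I}^+$ rather than at timelike infinity.
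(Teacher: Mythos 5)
Your proposal is correct and follows essentially the same route as the paper: the same function $f(x)=h(x)+2x$ governs the causal character (you obtain it from the norm of the tangent vector, the paper from the induced metric), the inequality $f(x_{AH})=2x_{AH}>0$ gives $x_{AH}<x_*$, and the final step is the same reduction of the ingoing null geodesic equation to the autonomous phase line $\dot{x}(u)=-f(x(u))/(2u)$ with $x_*$ as separatrix. Your write-up is in fact slightly more complete than the paper's, which only proves that rays starting with $x<x_*$ fall back onto the star and leaves the concavity argument for uniqueness of $x_*$ and the finite-$u$ blow-up of rays with $x>x_*$ implicit.
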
 
\begin{proof}
The metric induced on the hypersurface $r=xu$ in the exterior is given by
\[
-f(x)\,du^2+r^2d\Omega^2,\quad f(x)=h(x)+2x,
\]
where $h$ is the function defined in~\eqref{hdef}. The hypersurface $r=xu$ is spacelike if $f(x)<0$, timelike if $f(x)>0$ and null if $f(x)=0$.   A straightforward calculation shows that
\[
f(x_\Sigma)=\frac{1}{C_k^2}>0.
\]
As $f(x)\to -\infty$ when $x\to\infty$, there exists one (and only one) $x_*>0$ such that $f(x)>0$ for $x\in (x_\Sigma,x_*)$, $f(x_*)=0$ and $f(x)<0$ for $x>x_*$. The first part of the result follows. Moreover $f(x_\mathrm{AH})>0$ and so when the exterior apparent horizon exists we have $x_\mathrm{AH}<x_*$. As to the last statement, the equation for the ingoing radial null geodesics is 
\[
\frac{dr}{du}=-\frac{1}{2}B(u,r(u)).
\] 
Consider a solution $r(u)$ such that $r(u_0)/u_0<x_*$, for some $u_0>0$. Let $x(u)=r(u)/u$. As 
\[
\dot{x}(u)=-\frac{1}{2u}f(x(u)),
\] 
then $x(u)$ is decreasing for $x(u)<x_*$. It follows that $x(u)<x(u_0)<x_*$, for all $u>0$. Hence $f(x(u))\geq C>0$, for all $u>0$ and therefore
\[
x(u)\leq x_*-\frac{C}{2}\log\frac{u}{u_0}.
\]
We conclude that for $u$ large enough the ingoing light ray $r(u)$ must hit the boundary $x(u)=x_\Sigma$ in cases (i)-(ii) and the apparent horizon $x(u)=x_\mathrm{AH}$ in case (iii). In the latter case the ingoing light ray will enter the regular region and therefore hit the surface $x(u)=x_\Sigma$ of the star in finite time. In conclusion, all ingoing light rays escaping to infinity other than the null homothetic curve must satisfy $x(u)>x_*$, for all $u>0$, which concludes the proof of the theorem.  
\end{proof}
%\section{Gravitational collapse}
%In this section we study solutions that describe a collapsing star. We start with $\dot a(0)=0$. In the case $k=0,-1$ it follows that $\phi(0)<0$. For $k=1$ this is not necessarily the case, however any solution developing a singularity in finite time in the future must satisfy that $\dot a(t_0)=0$ and $\phi(t_0)\geq 0$ , at some unique time $t_0$. Hence without loss of generality we consider initial data $\dot a(0)=0$ and $\phi(0)\leq 0$ for all $k=0,\pm1$. 
The result of the previous theorem leads to the Penrose diagram shown in Figure~\ref{fig2}. 

\begin{figure}[ht!]
\begin{center}
\psfrag{i-}[cc][cc][0.8][0]{$i^{-}$}
\psfrag{i0}[cc][cc][0.8][0]{$i^{0}$}
\psfrag{i+}[ll][cc][0.8][0]{$i^{+}$}
\psfrag{R}[rc][cc][0.8][0]{$R=0$}
\psfrag{H}[ll][cc][0.8][0]{$x_{AH}$}
\psfrag{x}[ll][cc][0.8][0]{$x_{*}$}
\psfrag{t}[ll][bb][0.8][0]{$\mathscr{I^-}(u=0,r=0)$}
\psfrag{S+}[cc][cc][0.8][0]{$\mathscr{I^+}(r=\infty)$}
\psfrag{r}[cc][cc][0.8][0]{$u=0$, $r>0$}
% \psfrag{dS+}[cc][cc][0.8][0]{$\mathbf{\mathrm{dS}_+}$}
% \psfrag{dS-}[cc][cc][0.8][0]{$\mathbf{\mathrm{dS}_-}$}
% \psfrag{S}[cc][cc][0.8][0]{$\mathbf{\mathrm{S}_{(0)}}$}
 %\subfigure[$k=0$ and $k=-1$]
 {\includegraphics[width=0.65\textwidth]{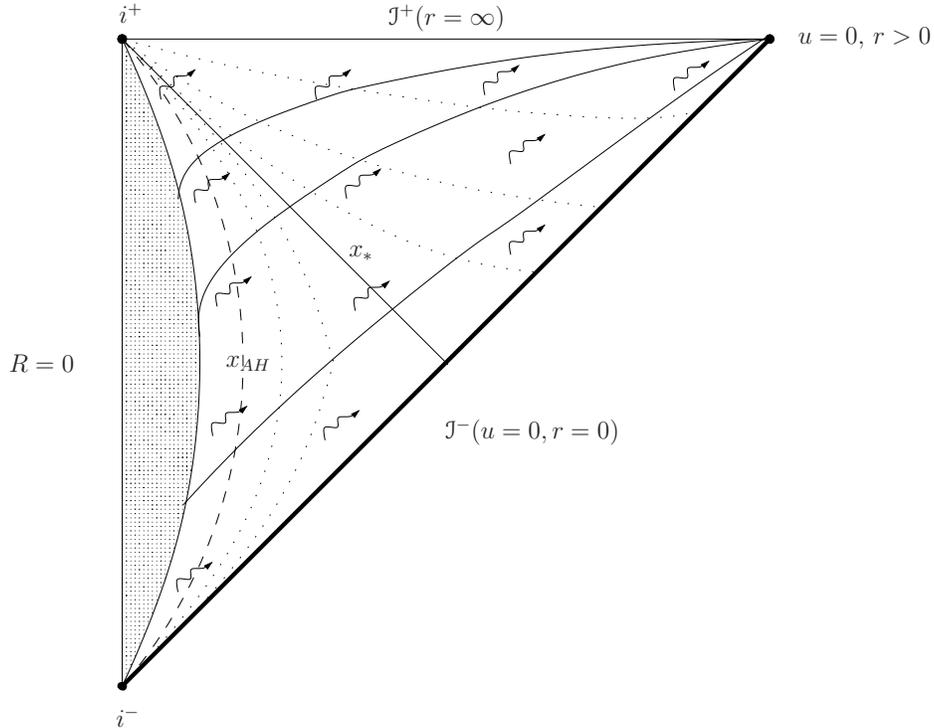}}%\qquad\qquad\qquad
% \psfrag{S-}[cc][ll][0.8][0]{\quad$\mathbf{\mathrm{S}_{(\!-\!1)}}$}
% \psfrag{$S_{(+1)}$}[rr][cr][0.8][0]{\quad$\mathbf{\mathrm{S}_{(\!+\!1)}}$}
% \psfrag{A}[lc][cc][0.8][0]{\A}
% \psfrag{B}[cc][cc][0.8][0]{\B}
%\subfigure[$k=1$]{\includegraphics[width=0.45\textwidth]{Stellar_Flat_Open.eps}}%\qquad
\end{center}
\caption{Penrose diagram for the self-similar stellar solution with diffusion in case (iii) 
showing the (homothetic) apparent horizon $x_{AH}$ in the exterior region (dashed timelike curve). 
The null homothetic (cosmological) horizon  $x_{*}$ is denoted by a solid line, while the other 
timelike and spacelike homothetic curves are denoted by dotted lines. 
The remaining solid lines are constant $r$ curves which are spacelike, becoming null at the apparent horizon, 
and are timelike afterwards. The thick solid line corresponds to the singularity. 
As usual, $i^{-}$ and $i^{+}$ represent past and future timelike infinity respectively, 
and $\mathscr{I^-}$, $\mathscr{I^+}$ denote past and 
future null infinity respectively.}\label{fig2}
\end{figure}

\section*{Acknowledgements}

A.~A. is supported by the projects EXCL/MAT-GEO/0222/2012, PTDC/MAT-ANA/1275/2014,
by CAMGSD, Instituto Superior T{\'e}cnico, by FCT/Portugal through
UID/MAT/04459/2013, and by the FCT Grant No. SFRH/BPD/85194/2012. 

The authors would like to thank J.~Nat\'ario for numerous discussions and the anonymous referee for valuable comments that helped to improve the paper.
Furthermore, A.~A.  thanks the Department of Mathematics at Chalmers University, Sweden, for the very kind hospitality.


\begin{thebibliography}{10}

\bibitem{OS39}
J. Oppenheimer, Snyder : On Continued Gravitational Contraction. 
Physical Review {\bf 56}, 455--459 (1939)

\bibitem{Chr84}
D. Christodoulou: Violation of Cosmic Censorship in the
Gravitational Collapse of a Dust Cloud . Commun. Math. Phys. {\bf 93}, 171--195 (1984)

\bibitem{Chr94}
D. Christodoulou: Examples of naked singularity formation in the gravitational collapse of a scalar field,
Ann. of Math. {\bf 140}, 607--653, (1994)

\bibitem{Chr99}
D. Christodoulou: The instability of naked singularities in the gravitational collapse of a scalar field.
Ann. of Math. {\bf 149}, 183-217 (1999) 

\bibitem{And} H. Andr\'easson: The Einstein-Vlasov System/Kinetic Theory. 
Living Rev. Relativ. (2011) 14: 4


\bibitem{ES79}
D. M. Eardley, L. Smarr. Time functions in numerical relativity: marginally bound dust collapse, 
Phys. Rev. D {\bf 19}, 2239 (1979)


\bibitem{Chr08}
D. Christodoulou: The Formation of Black Holes in General Relativity. 
European Mathematical Society Monographs in Mathematics, (2009)

\bibitem{calogero1}
S. Calogero: A kinetic theory of diffusion in general relativity with cosmological scalar field. JCAP 11/2011, 016 (2011)

\bibitem{calogero2}
S. Calogero: Cosmological models with fluid matter undergoing velocity diffusion. J. Geom. Phys.
{\bf 62}, 2208--2213 (2012)

\bibitem{calogero3} S.~Calogero, H.~Vetten: Cosmology with matter diffusion. JCAP 11/2013, 025 (2013)


\bibitem{FST96} 
F. Fayos, J. M. M. Senovilla, and R. Torres:
General matching of two spherically symmetric spacetimes.
Phys. Rev. D {\bf 54}, 4862 (1996)

\bibitem{GP09}
J. Griffiths and J. Podolsky.
Exact space-times in Einstein's General Relativity.
Cambridge Monographs on Mathematical Physics, (2009).


\bibitem{AC15} 
A. Alho, S. Calogero, M. P. Ramos, A. J. Soares: Dynamics of Robertson-Walker spacetimes with diffusion. 
Ann. Physics {\bf 354},  475--488 (2015)

\bibitem{nakao}
K.~Nakao: The Oppenheimer-Snyder space-time with a cosmological constant. Gen. Rel. Grav. {24}, 1069-1081 (1992)

\bibitem{dudley} R.~M.~Dudley: Lorentz-invariant Markov processes in relativistic phase space. Ark.~Mat. {\bf 6}, 241-–268 (1966)

\bibitem{ES45}
A. Einstein, E. G. Strauss, 
The influence of the expansion of space on the gravitation fields of surrounding the individual stars. 
Rev. Mod. Phys. 17, 120-124 (1945)

\bibitem{FJLS} F.~Fayos, X.~Jaen, E.~Llanta, J.~M.~M.~Senovilla: Matching the Vaidya and Robertson-Walker metric. Class.~Quant.~Gra. {\bf 8}, 2057-2068 (1991)


\bibitem{BOS89} 
W. B. Bonnor, A. K. G. de Oliveira, N. O. Santos: 
Radiating Spherical Collapse. 
Physics Reports {\bf 181} 5, 269--326 (1989)

















% \bibitem{Chr95}
% D. Christodoulou: Self-Gravitating Relativistic Fluids: A Two-Phase Model . 
% Arch. Rational Mech. Anal. {\bf 130}, 343--400 (1995)
% 
% \bibitem{Chr96a}
% Christodoulou, D. "Self-gravitating relativistic fluids: the continuation and termination of a free phase boundary",
% Arch. Rational Mech. Anal. {\bf 133}, 333--398 (1996).
% 
% \bibitem{Chr96b}
% D. Christodoulou. Self-gravitating relativistic fluids: The formation of a free phase boundary in the phase transition from soft to hard
% Arch. Rational Mech. Anal. {\bf 134}, 97--154 (1996)


% \bibitem{Chr96b}
% D. Christodoulou. Relativistic fluids and gravitational collapse. 
% Global Structure and Evolution in General Relativity, Volume 460 of the series Lecture Notes in Physics pp 33-54, (1996)



%(arXiv:math/9901147)

% \bibitem{Chr07}
% D. Christodoulou: The Formation of Shocks in 3-Dimensional Fluids.
% EMS Monographs in Mathematics (2007)


%(arXiv:0805.3880)

% \bibitem{CL14}
% D. Christodoulou and A. Lisibach: Self-Gravitating Relativistic Fluids: The Formation of a Free Phase Boundary in the Phase 
% Transition from Hard to Soft. arXiv:1411.4888, (2014).
% 
% \bibitem{CL15}
% D. Christodoulou and A. Lisibach: Shock Development in Spherical Symmetry. arXiv:1501.04235, (2015)


 
\end{thebibliography}
\end{document}